\title{\large{\textbf     {
 ON CLUSTER PROPERTIES OF 
 CLASSICAL FERROMAGNETS IN AN EXTERNAL MAGNETIC FIELD
 }}}
\date{}
\numberwithin{equation}{section}
\newtheorem{thm}{Theorem}
\newtheorem{lem}[thm]{Lemma}
\newtheorem{corollary}[thm]{Corollary}
\newtheorem*{thmbis}{Theorem \ref{T:main}bis}
\theoremstyle{remark}
\newtheorem*{rmk}{Remarks}
\renewcommand{\Re}[0]{\mathrm{Re}}
\renewcommand{\Im}[0]{\mathrm{Im}}
\theoremstyle{definition}
\begin{document}

\maketitle

\begin{center}
\vspace{-1cm}

\textit{Dedicated to David Ruelle and Yasha Sinai, \\
on the occasion of their 80$^{\, th}$ birthday.}\\
\vspace{1.2cm}
J\"urg Fr\"ohlich$^1$ and Pierre-Fran\c cois Rodr\'iguez$^2$ 

\vspace{1.3cm}
Preliminary draft
\end{center}
\vspace{0.5cm}
\begin{abstract}
\centering
\begin{minipage}{0.8\textwidth}
\vspace{0.5cm}
Correlation functions of ferromagnetic spin systems satisfying a Lee-Yang property are studied. It is shown that, for classical systems in a non-vanishing uniform external magnetic field $h$, the connected correlation functions decay exponentially in the distances between the spins, i.e., the \textit{inverse correlation length} (``mass gap''), $m(h)$, is \textit{strictly positive}. Our proof is very short and transparent and is valid for complex values of the  external magnetic field $h$, provided that $\Re \, h \not= 0$. It implies a mean-field lower bound on $m(h)$, as $h \searrow 0$, first established by Lebowitz and Penrose for the Ising model. Our arguments also apply to some quantum spin systems.
\end{minipage}
\end{abstract}

\thispagestyle{empty}

\vspace{1.7cm}

\begin{flushleft}
$^1$Institut f\"ur Theoretische Physik  \hfill December 2015 \\
ETH Z\"urich, HIT K 42.3\\
Wolfgang-Pauli-Strasse 27\\
CH-8093 Z\"urich \\
\texttt{juerg@ethz.ch}

\vspace{1cm}

$^2$Department of Mathematics \\
University of California, Los Angeles \\
520, Portola Plaza, MS 6172\\
Los Angeles, CA 90095 \\
\texttt{rodriguez@math.ucla.edu}
\end{flushleft}

\newpage
\mbox{}
\thispagestyle{empty}
\newpage

\section{Scope of Analysis and Models to be Considered}

In this note, we study a general family of classical ferromagnetic lattice spin systems satisfying the Lee-Yang circle theorem, with the purpose to derive cluster properties of connected correlation (Ursell) functions. The class of models encompasses, for instance, Ising-type models, including ones with continuous one-component spins, and the classical (and quantum) XY-  and Heisenberg models. For such models, we present a new, simple and arguably elegant proof of the claim that the connected two-point function $\langle {\varphi}_0^i \, ; \,  {\varphi}_x^j \rangle^{\Phi}_{ \beta, h}$ (for a precise definition see \eqref{eq:2_pt}, below) decays exponentially in the distance 
$\vert x \vert$, provided the external magnetic field $h$ is in the ``Lee-Yang region'', i.e., whenever $h \in \mathbb{C}$ satisfies $\Re \, h \neq 0$. This result is shown to hold at any inverse temperature $\beta\geq0$ for which the Lee-Yang circle theorem holds. It can be generalized to yield ``tree decay'' of arbitrary connected correlation (Ursell) functions, (i.e., exponential decay in the length of the shortest spanning tree). The range of validity of these results depends on the specific model under consideration and, in particular, on its spin-interaction ``potential'', $\Phi$. A survey of models for which the Lee-Yang theorem holds, along with a fairly exhaustive list of references, can be found in \nolinebreak \cite{FR12}.

Our proof combines a certain representation, derived in \cite{FR12}, of the thermodynamic limits of Ursell functions with periodic boundary conditions, for $h$ in the Lee-Yang region ($\Re \,h \not=0$), with estimates  valid at large values of $\Re \,h$ and established with the help of a cluster expansion and with a straightforward application of the maximum principle in suitably chosen regions of the complex-$h$ half-plane corresponding to $\Re \, h > 0$ (or $\Re \, h < 0$, respectively). The Lee-Yang theorem appears only implicitly in our arguments in that we rely on analyticity results proven in \cite{FR12} that hinge on this theorem but are not presented explicitly in the present paper, anymore. For a limited class of lattice gases, our results were first derived in \cite{LP74}, using subharmonicity arguments.  

The general setup underlying our discussion is as follows. We consider a random field $\varphi = (\varphi_x)_{x \in \mathbb{Z}^d}$ on the $d$-dimensional simple cubic lattice $\mathbb{Z}^{d}$, with $d\geq 2$. The variable $\varphi_{x} = (\varphi^{1}_{x}, ..., \varphi^{N}_{x}) \in \Omega \stackrel{\text{def.}}{=} \mathbb{R}^{N}, N \geq 1,$ describes a classical ``spin'' at site $x \in \mathbb{Z}^{d}$. The energy of a configuration, $\varphi_{\Lambda}$, of spins located in an arbitrary finite subset $\Lambda \subset \subset \mathbb{Z}^d$ is given by the Hamiltonian
\begin{equation}\label{eq:H}
H_{\Lambda}(\varphi) = H^{\Phi}_{\Lambda}(\varphi)- h \sum_{x\in \Lambda} \varphi_x^1, \qquad 
H^{\Phi}_{\Lambda}(\varphi) = \sum _{X \subset \Lambda: \, |X| \geq 2} \Phi(X)(\varphi_X), 
\end{equation}
(with periodic boundary conditions imposed at the boundary of $\Lambda$, in case $\Lambda$ is a rectangle), where $\varphi_X = (\varphi_x)_{x\in X}$, the magnetic field $h \in \mathbb{C}$ is arbitrary, and, for every finite subset $X \subset \mathbb{Z}^{d} $, 
$\Phi(X): \Omega^X\to \mathbb{R}$ is a real-valued, bounded, continuous function representing the interaction energy between the spins in $X$. Besides assuming invariance of $\Phi(\cdot)$ under lattice translations, we have to impose certain conditions on the dependence of the sup-norm of $\Phi(X)$ on $X$; see, e.g., \cite{FR12}, and below. The specification of the model is completed by choosing a finite a-priori measure, $\mu_0$, on the configuration space $\Omega$ of a single spin. We require the following assumptions on $\mu_0$:
\begin{equation} \label{eq:cond_mu_0}
\tag{C1}
\begin{split}
&\text{$\mu_0$ is invariant under rotations of $\Omega$, $\text{supp }\mu_0 \subset \Omega$ is compact, and}\\
&\hspace{ 2.5cm } \hat{\mu}_0(h) \stackrel{\text{def.}}{=}\int e^{h t^1} d\mu_0(\mathbf{t}) \neq 0, \text{ for $\Re \, h \neq 0$}.
\end{split}
\end{equation}
Given that we will make use of the Lee-Yang theorem, our requirement on the Laplace transform of $\mu_0$ is eminently reasonable: it states that the Lee-Yang theorem holds for $\Phi \equiv 0$. The only purpose of the compactness condition is  to avoid troubles with the convergence of certain integrals -- it could be relaxed. The dependence of physical quantities on the choice of $\mu_0$ will usually not be made explicit.

The finite-volume partition function of the system at inverse temperature $\beta > 0$, in the presence of sources described by a bounded continuous function, $f$, of $\varphi = \varphi_{\Lambda}$, is given by
\begin{equation}\label{eq:Z}
Z_{\Lambda,\beta, h}^{\Phi} (f)= \int_{\Omega^{\Lambda}} f(\varphi) \exp[- \beta H^{\Phi}_{\Lambda}(\varphi) ] d\nu_{\Lambda, \beta h} (\varphi),
\end{equation}
where $d\nu_{\Lambda, \beta h} (\varphi) \stackrel{\text{def.}}{=} \prod_{x\in \Lambda} d\nu_{\beta h} (\varphi_x)$, and, for later convenience, we introduce the notation $d\nu_{\beta h} (\mathbf{t})= e^{\beta h t^1} d\mu_0(\mathbf{t}) / \int_{\Omega}  e^{\beta h s^1} d\mu_0(\mathbf{s})$. Thermal averages are denoted by 
$$\langle f \rangle^{\Phi}_{\Lambda, \beta, h} = Z_{\Lambda,\beta, h}^{\Phi}(1)^{-1}Z_{\Lambda,\beta, h}^{\Phi}(f),$$
 with $f$ as above. In order to guarantee convergence of the free energy in the thermodynamic limit $\Lambda \nearrow \mathbb{Z}^d$, the interaction potential $\Phi$ is assumed to be translation invariant (as indicated above) and to satisfy $\sum_{X \ni 0 } |X|^{-1} ||\Phi(X)||_{\infty} < \infty$, where the norm $||\cdot ||_{\infty}$ is the sup-norm on supp $\mu_0^{\otimes X} \subset \Omega^{\times X}$; see, e.g., \cite{Si93}, Chapter 2. Moreover, $\Phi$ is assumed to satisfy a suitable condition of \textit{ferromagnetism} that depends on its specific form. Together with condition \eqref{eq:cond_mu_0} on $\mu_0$, this requirement typically implies a \textit{Lee-Yang property}, which entails that $Z_{\Lambda}^{\Phi}(\beta, h) \neq 0$ whenever $\Re\, h \neq 0$, for suitable values of $\beta$; (for systems with multi-component spins, more specific hypotheses will have to be imposed).

\medskip
\noindent In this paper, $c,c',...,$ denote generic positive constants whose values may change from one place to another. Numbered constants $c_0,c_1,...,$ are defined where they first appear within the text and remain fixed from then on.

\section{Exponential Clustering}

To keep things simple, we temporarily consider models with  pair interactions quadratic in the classical spins. Generalizations to other types of potentials $\Phi$ are discussed in Section \nolinebreak \ref{S:Epilogue}. Thus, we assume that
\begin{equation} \label{eq:cond_J}
\tag{C2} 
\begin{split}
&\Phi(\{ x, y\})(\varphi_x, \varphi_y) = - \sum_{1\leqslant i \leqslant N} J_{xy}^i \varphi_x^i \varphi_y^i, \quad \Phi(X) = 0, \text{ if $|X|\geq 3$},\text{ and }\\
&J_{xy}^1 \geq \sum_{2\leqslant k \leqslant N}|J_{xy}^k| \text{ (``strong'' ferromagnetism),} \\
&J_{xy}^k = 0,\, \forall k, \text{ whenever $|x-y| \geq r$, (finite range),}
\end{split}
\end{equation}
for a given $r \geq1$, and all $x, y \in \mathbb{Z}^d$. The ferromagnetic condition is optimal for $N=2$, but not optimal for $N \geq 3$; (it is incompatible with $O(N)$-symmetry of the model). In the special case where $N=3$ and $\mu_0$ is the uniform measure on the (two-dimensional) unit sphere $S^2$, one can replace it by the more natural constraint $J_{xy}^1 \geq |J_{xy}^2| \vee  |J_{xy}^3|$. This class of models includes the classical Heisenberg model. Under the conditions \eqref{eq:cond_mu_0} and \eqref{eq:cond_J} on $(\mu_0,\Phi)$, which guarantee that a Lee-Yang theorem holds (see, e.g.,  \cite{Ne74}, \cite{DN75} and \cite{LS81}), it was shown in \cite{FR12} that the connected two-point function
\begin{equation}\label{eq:2_pt}
\langle {\varphi}_0^i \, ; \,  {\varphi}_x^j \rangle^{\Phi}_{ \beta, h} \stackrel{\text{def.}}{=} \lim_{\Lambda \nearrow \mathbb{Z}^d} \langle {\varphi}_0^i \, ; \,  {\varphi}_x^j \rangle^{\Phi}_{\Lambda, \beta, h}, \text{ for $1\leq i,j\leq N$, $\beta > 0$, and $\Re \, h \neq 0$},
\end{equation}
with $\langle {\varphi}_0^i \, ; \,  {\varphi}_x^j \rangle^{\Phi}_{\Lambda, \beta, h} =  \langle {\varphi}_0^i  {\varphi}_x^j \rangle^{\Phi}_{\Lambda, \beta, h} -  \langle {\varphi}_0^i \rangle^{\Phi}_{\Lambda, \beta, h} \langle  {\varphi}_x^j \rangle^{\Phi}_{\Lambda, \beta, h}$, is well defined and does not depend on the choice of boundary conditions, so long as the latter do not invalidate the Lee-Yang theorem. This includes in particular free and periodic boundary conditions. The thermodynamic limit, $\Lambda \nearrow \mathbb{Z}^{d}$, in \eqref{eq:2_pt} is understood in the sense of van Hove \cite{Si93}; (readers not familiar with this notion may simply think of taking $\Lambda$ to be a lattice cube centered at $0$ with sides of length $n$, with appropriate boundary conditions imposed at $\partial \Lambda$, and letting $n \to \infty$). 

In our proofs we require one further condition on the a-priori measure $\mu_0$, which guarantees that exponential decay of connected spin-spin correlation functions holds at large values of $|\Re \, h|$, assuming only that $| \Im \, h | \leq const. \times | \Re \, h |$. This will be proven with the help of a ``large-field cluster expansion''; see Section \ref{S:cluster}. Large-field cluster expansions have been developed in \cite{Sp74} in the context of two-dimensional continuum euclidian field theories. For reasons that will become apparent below, it is useful to parametrize the constant in the above inequality as $const. = \tan\alpha$, for $\alpha$ contained in $(0,\frac \pi 2)$. Namely, we assume that,
\begin{equation}\label{eq:cond_Laplace}
\tag{C1'}
\begin{split}
\begin{array}{c}
\text{for all $u_0>0$, there exists
$\tilde{\alpha} = \tilde{\alpha}(u_0) \in (0,\pi/2)$ such that}\\
\displaystyle \min_{u > u_0} \left( \max_{z=u+iv:\vert v \vert \leqslant u\cdot \tan\tilde{\alpha}} \frac{\hat{\mu}_{0}(u)}{\vert\hat{\mu}_{0}(u+iv)\vert} \right) \leq \kappa(\tilde{\alpha}) < \infty,
\end{array}
\end{split}
\end{equation} 
\vspace{0.3cm}\\
for a finite constant $\kappa(\tilde{\alpha})$. We will refer to the (smallest) value of $u > u_0$ achieving the minimum as $\tilde{u}(u_0)$. Condition \eqref{eq:cond_Laplace} can sometimes be strenghtened to
\vspace{0.3cm} 
\begin{equation}\label{eq:cond_Laplace2}
\tag{C1''}
\begin{split}
\begin{array}{c}
\text{the bound in \eqref{eq:cond_Laplace} holds for  } all\\
\displaystyle \tilde{\alpha} \in (0,\pi/2), \text{ and } \kappa =  \sup_{0 <\tilde{\alpha}<  \pi/ 2} \kappa(\tilde{\alpha}) < \infty.
\end{array}
\end{split}
\end{equation} 


\begin{rmk} (i) The denominator in \eqref{eq:cond_Laplace} does not vanish for any measure $\mu_0$ satisfying \eqref{eq:cond_mu_0}. \\
(ii) One can verify condition \eqref{eq:cond_Laplace2} by inspection for some of the most common models. For instance, \eqref{eq:cond_Laplace2} holds for $\mu_0 = \delta_1 + \delta_{-1}$ (Ising spins), and for $\mu_0$ given by the uniform distribution on $S^{N-1}$, for all $N \geq 2$. \\
(iii) Our main result, Eqn. \eqref{eq:main} below, continues to hold under the following weakened version of condition \eqref{eq:cond_Laplace}: Instead of letting $z$ vary along the vertical line segment joining the points
$u(1 \pm i\tan(\tilde{\alpha}))$, with $\tilde{\alpha}= \tilde{\alpha}(u_0)$,  for some $u > u_0$, it suffices that an appropriate bound along \textit{some} smooth curve joining these two endpoints and contained in the set $\{ z' \in \mathbb{C}: \Re \, z' > u_0\}$ hold. 
\end{rmk}
\medskip
We are now ready to state our main result. The inverse correlation length (mass gap), $m (\beta, h)$, defined as
\begin{equation} \label{eq:mass_gap}
m(\beta, h) = - \max_{1\leqslant i,j \leqslant N} \limsup_{|x|\to \infty} \frac{1}{|x|} \log |\langle {\varphi}_0^i \, ; \,  {\varphi}_x^j \rangle^{\Phi}_{ \beta, h}|,
\end{equation}
is a measure for the exponential rate of decay of the two-point function, as $|x| \to \infty$. For pair interactions satisfying assumption \eqref{eq:cond_J} (but not in general(!) -- see Section \ref{S:Epilogue}, below), our arguments turn out to hold at \textit{any} inverse temperature $\beta > 0$. We therefore set $\beta =1$ and omit it from our notation.
\vspace{1cm}
\begin{thm} \label{T:main} (Positivity of the mass gap).

\medskip
\noindent For $(\mu_0, \Phi)$ satisfying conditions \eqref{eq:cond_mu_0}, \eqref{eq:cond_Laplace} and \eqref{eq:cond_J},
\begin{equation} \label{eq:main}
m(h) > 0, \text{ for all real-valued $h \neq 0$}.
\end{equation} 
If, in addition, \eqref{eq:cond_Laplace2} is satisfied, \eqref{eq:main} holds for all values of $h \in \mathbb{C}$ with $\Re \, h \neq 0$.
\end{thm}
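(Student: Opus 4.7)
The plan is to combine two ingredients already signalled in the introduction: a large-field cluster expansion (to be carried out in Section~\ref{S:cluster}), which yields exponential decay of $F_x(h) := \langle \varphi_0^i\,;\,\varphi_x^j\rangle^{\Phi}_{h}$ in a complex wedge around the positive real axis, together with the analyticity of $F_x(\cdot)$ on the Lee--Yang half-plane $\{\Re\, h > 0\}$ furnished by the representation of \cite{FR12}. These will be glued together by the two-constants (harmonic-measure) form of the maximum principle, applied in a carefully chosen bounded subregion of $\{\Re\, h > 0\}$, much in the spirit of \cite{LP74} but bypassing its lattice-gas restrictions through the cleaner Lee--Yang input.

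I would first establish, using the cluster expansion of Section~\ref{S:cluster} together with assumption \eqref{eq:cond_Laplace}, the existence of $u_0>0$, of $\tilde u = \tilde u(u_0) > u_0$ and $\tilde\alpha = \tilde\alpha(u_0) \in (0,\pi/2)$, and of positive constants $C, c$, such that
\begin{equation*}
\sup_{|v| \leq \tilde u \tan\tilde\alpha} |F_x(\tilde u + iv)| \leq C e^{-c|x|}, \qquad \forall\, x \in \mathbb{Z}^d.
\end{equation*}
Heuristically, at large $\Re\, h$ the single-spin measure $d\nu_{\beta h}$ concentrates near the maximum of $t^1$, spins nearly freeze along the field and the usual polymer bounds give exponential decay of truncated correlations; the role of \eqref{eq:cond_Laplace} is precisely to allow one to carry these bounds from the real axis into a complex wedge, by controlling absolute values of polymer activities twisted by the oscillating factor $e^{i\beta v \sum_x \varphi_x^1}$ via the quotient $\hat\mu_0(u)/|\hat\mu_0(u+iv)| \leq \kappa(\tilde\alpha)$.

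Next, I would invoke \cite{FR12} to assert that $F_x(h)$ extends holomorphically to $\{\Re\, h > 0\}$ and obeys a uniform-in-$x$ bound $|F_x(h)| \leq C_K$ on any compact $K$ of this half-plane. Fixing a target $h_\ast$ with $\Re\, h_\ast > 0$, and arranging additionally that $|\arg h_\ast| < \tilde\alpha$ and $\Re\, h_\ast < \tilde u$, I would work inside the bounded sectorial domain
\begin{equation*}
R := \bigl\{h \in \mathbb{C} : 0 < \Re\, h < \tilde u,\ |\Im\, h| < (\Re\, h)\tan\tilde\alpha\bigr\},
\end{equation*}
whose boundary splits as $\partial R = A \cup B$ with $A := \{\tilde u + iv : |v| \leq \tilde u \tan\tilde\alpha\}$ the right edge (controlled by the cluster estimate above) and $B$ the two slanted sides through the origin (on which $|F_x| \leq C_{\overline{R}}$ by the a priori bound). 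Since $\log|F_x|$ is subharmonic in $R$, the two-constants theorem then yields
\begin{equation*}
\log |F_x(h_\ast)| \leq \omega_R(h_\ast; A)\bigl(-c|x| + \log C\bigr) + \bigl(1 - \omega_R(h_\ast; A)\bigr)\log C_{\overline{R}},
\end{equation*}
with harmonic measure $\omega_R(h_\ast; A) > 0$ depending only on $h_\ast$ and the geometry of $R$. This gives $m(h_\ast) \geq c\,\omega_R(h_\ast; A) > 0$, and the symmetry $h \mapsto -h$ built into \eqref{eq:cond_mu_0} handles the left half-plane.

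The main obstacle is the cluster expansion of the first step, specifically the requirement that it converge along the complex segment $A$ rather than merely on the real axis; this is where \eqref{eq:cond_Laplace} enters crucially, and where most of the technical work will be concentrated. Once the expansion is in hand, the complex-analytic closing argument is essentially soft. Under only \eqref{eq:cond_Laplace}, the wedge angle $\tilde\alpha$ is fixed, so the sector $R$ reaches only those $h$ with $|\arg h| < \tilde\alpha$, which nevertheless includes every real $h\neq 0$ and thus yields the first half of the theorem. Under the stronger \eqref{eq:cond_Laplace2}, $\tilde\alpha$ may be sent arbitrarily close to $\pi/2$; the sectors $R$ then exhaust the open right half-plane, and the same maximum-principle argument reaches every $h \in \mathbb{C}$ with $\Re\, h \neq 0$, completing the proof.
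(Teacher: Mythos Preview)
Your strategy is essentially the paper's: a large-field cluster expansion (the content of Section~\ref{S:cluster}, Lemmas~\ref{L:gamma_c_bound} and~\ref{L:eq:clust_conv}) yields exponential decay on a vertical segment at $\Re\, h = \tilde u$, analyticity on $\mathbb{H}_+$ comes from \cite{FR12}, and a maximum-principle argument in a bounded sectorial domain transfers the decay to the target point $h_\ast$. The paper implements the last step slightly differently: rather than invoke the two-constants theorem for the subharmonic function $\log|F_x|$, it multiplies $F_x$ by the explicit analytic factor $e^{\varepsilon z^{\pi/2\alpha}|x|}$ (cf.~\eqref{eq:F}) and applies the ordinary maximum principle to the product. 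Since $z\mapsto z^{\pi/2\alpha}$ conformally opens the sector of half-angle $\alpha$ to a half-plane, $\Re(z^{\pi/2\alpha})$ vanishes on the radial sides and is positive inside---this is precisely (a multiple of) the harmonic function solving the Dirichlet problem underlying your harmonic-measure weight $\omega_R(\,\cdot\,;A)$, so the two implementations are equivalent. The paper's explicit form has the minor advantage of making the quantitative bound $m(h)\geq c\,|\Re\, h|$ of Corollary~\ref{C:main2} transparent without having to estimate a harmonic measure.

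There is, however, one genuine gap in your argument as written. Your domain $R$ has the origin on its closure, and the a~priori bound $|F_x|\leq C_{\overline R}$ you invoke on the slanted sides $B$ is not available: the results of \cite{FR12} yield uniform-in-$x$ bounds only on compact subsets of $\{\Re\, h>0\}$, and $\overline R$ is not such a set. (At $h=0$ one expects long-range order at low temperature, and the connected two-point function need not even be well-defined as a limit independent of boundary conditions.) The paper handles this by excising a small closed disk $\overline D_\delta(0)$ from the triangle, working in $\Sigma_\alpha = \mathcal{T}_\alpha \setminus \overline D_\delta(0)$; the resulting additional circular boundary arc $\gamma_{\textrm{c}}$ at radius $\delta$ \emph{is} a compact subset of $\mathbb{H}_+$, so the bound from \cite{FR12} applies there (this is Lemma~\ref{L:gamma_r_bound}). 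With this truncation your two-constants argument goes through unchanged: on $\gamma_{\textrm{c}}\cup\gamma_{\textrm{r}}$ one has $|F_x|\leq C$ uniformly in $x$, on $\gamma_{\textrm{v}}$ one has $|F_x|\leq Ce^{-c|x|}$, and the harmonic measure of $\gamma_{\textrm{v}}$ seen from $h_\ast$ remains strictly positive.
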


\noindent \textit{Remark.} Under the assumptions of Theorem \ref{T:main}, our proof of exponential clustering (tree decay) extends to connected $n$-point correlation functions, for arbitrary $n \geq 2$, and for $h$ in the entire Lee-Yang region; see Section \ref{S:Epilogue}.

\begin{proof} 
Without loss of generality we may assume that $\Re \, h > 0$. 
We introduce two parameters 
\begin{equation}\label{eq:delta_eta}
\delta \in (0,1) \text{ and } \eta \geq 1,
\end{equation}
to be fixed later. The choice of $\eta$ will characterize a region of \textit{large} magnetic fields. We use the notation $\mathbb{H}_+ = \{z \in \mathbb{C}; \, \Re \, z > 0 \}$ 
and define an open domain $\Sigma_{\alpha} \subset \mathbb{H}_+$, depending on an angle $\alpha \in (0,\frac{\pi}{2})$ and the parameters in \eqref{eq:delta_eta}, as follows: 
Let $\mathcal{T}_{\alpha}$ be the interior of the triangle with endpoints $0$ and $p_{\pm} = \eta(1\pm i \tan \alpha)$, and define $\Sigma_{\alpha} = \mathcal{T}_{\alpha}\setminus \overline{D}_\delta(0)$, where the latter refers to the closed disk of radius $\delta$ around the origin.
The boundary of $\Sigma_{\alpha}$ is the union of three subsets, $\partial \Sigma_{\alpha} = \gamma_{\textrm{c}} \cup \gamma_{\textrm{r}} \cup \gamma_{\textrm{v}}$, consisting of a circular part, $\gamma_{\textrm{c}}$, which is an arc of central angle $2 \alpha$ symmetric about the positive real axis, on the circle of radius $\delta$ around $0$, a radial part, $\gamma_{\textrm{r}}$, made up of the two line segments emanating from the endpoints of this arc and ending in $p_+$, resp. $p_-$, and a vertical part, 
$\gamma_{\textrm{v}}$, joining the points $p_+$ and $p_-$. 
The dependence of various quantities on $\delta$ and $\eta$ will usually be implicit.

We fix components $i, j \in \{ 1,\dots, N\}$, as in \eqref{eq:mass_gap}. 
For arbitrary $x \in \mathbb{Z}^d$ and $\varepsilon > 0$, we consider the function $\mathscr{F}_x^{\varepsilon, \alpha}:  \mathbb{H}_+ \to \mathbb{C}$,
\begin{equation}\label{eq:F}
\mathscr{F}_x^{\varepsilon, \alpha}(z) \stackrel{\text{def.}}{=} e^{\varepsilon z
^{\pi / 2 \alpha}|x|} \cdot\langle {\varphi}_0^i \, ; \,  {\varphi}_x^j \rangle^{\Phi}_{z}, \text{ for } z \in \mathbb{H}_+,
\end{equation} 
where, for definiteness, we set $a^b = \exp(b \cdot \text{Log} \, a)$, for $a \in \mathbb{C}\setminus \{ 0\}$, $b \in \mathbb{C}$, with $\text{Log}$ denoting the principal branch of the natural logarithm. We further define 
$\mathscr{F}_x^{\varepsilon}(z) =e^{\varepsilon|x|} \cdot\langle {\varphi}_0^i \, ; \,  {\varphi}_x^j \rangle^{\Phi}_{z}$,  $z \in \mathbb{H}_+$. By Theorem 7 in \cite{FR12} (see also the discussion at the end of Section III therein), the functions $\mathscr{F}_x^{\varepsilon, \alpha}(\cdot)$, $\mathscr{F}_x^{\varepsilon}(\cdot)$  are analytic on $\mathbb{H}_+$, for all 
$\varepsilon > 0$ and $\alpha\in (0,\frac{\pi}{2})$. We propose to derive certain uniform upper bounds for the family $\{ |\mathscr{F}_x^{\varepsilon, \alpha}(\cdot)| \}_{ \, x \in \mathbb{Z}^d }$ on the boundary of the domain 
$\Sigma_{\alpha}$ and then apply the maximum principle in order to extend them to the interior of 
$\Sigma_{\alpha}$. These bounds form the contents of the following two lemmas.

\begin{lem}\label{L:gamma_r_bound} (Estimate along $\gamma_{\textnormal{r}} \cup \gamma_{\textnormal{c}}$; $\alpha \in (0,\frac{\pi}{2})$,  $\delta\in (0,1)$, $\eta \geq 1$). 

\medskip
\noindent There exists $c_0(\alpha, \delta, \eta) \in (0, \infty)$ such that, for all $0 < \varepsilon \leq 1$,
\begin{equation}\label{eq:gamma_r_bound}
 \sup_{x\in \mathbb{Z}^d}  \, \sup_{z \in \gamma_{\textnormal{r}} \cup \gamma_{\textnormal{c}}} |\mathscr{F}_x^{\varepsilon, \alpha}(z)| \leq c_0(\alpha,\delta, \eta)\cdot e^{\varepsilon \delta^{\pi/2\alpha}|x|}.
\end{equation}
\end{lem}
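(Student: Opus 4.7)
My plan is to separate the estimate into a purely geometric bound on the prefactor $e^{\varepsilon z^{\pi/2\alpha}|x|}$ and a uniform (in $x$) analytic bound on the connected correlation $\langle \varphi_0^i ; \varphi_x^j \rangle^{\Phi}_z$ on the compact set $\gamma_{\textnormal{c}} \cup \gamma_{\textnormal{r}}$.

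For the geometric piece, I would exploit the fact that the exponent $\pi/2\alpha$ is tuned precisely so that the power map $z \mapsto z^{\pi/2\alpha}$ sends the sector $\{|\arg z| \leq \alpha\}$ into the closed right half-plane and the two radial rays $\arg z = \pm\alpha$ onto the imaginary axis. Every $z \in \gamma_{\textnormal{c}} \cup \gamma_{\textnormal{r}}$ satisfies $|\arg z| \leq \alpha$, so $\Re(z^{\pi/2\alpha}) = |z|^{\pi/2\alpha} \cos((\pi/2\alpha)\arg z) \geq 0$. On $\gamma_{\textnormal{r}}$ one has $\arg z = \pm\alpha$, the cosine factor vanishes, and $|e^{\varepsilon z^{\pi/2\alpha}|x|}| = 1$; on $\gamma_{\textnormal{c}}$ one has $|z| = \delta$ and $\Re(z^{\pi/2\alpha}) \leq \delta^{\pi/2\alpha}$, so $|e^{\varepsilon z^{\pi/2\alpha}|x|}| \leq e^{\varepsilon \delta^{\pi/2\alpha}|x|}$. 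In either case the prefactor is bounded by $e^{\varepsilon \delta^{\pi/2\alpha}|x|}$, uniformly in $\varepsilon \in (0,1]$.

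It then remains to produce a constant $c_0(\alpha, \delta, \eta)$ with $|\langle \varphi_0^i ; \varphi_x^j \rangle^{\Phi}_z| \leq c_0(\alpha, \delta, \eta)$ uniformly in $x \in \mathbb{Z}^d$ and $z \in \gamma_{\textnormal{c}} \cup \gamma_{\textnormal{r}}$. Since $\text{supp}\,\mu_0$ is compact, each spin component is bounded by some $R < \infty$; the Lee-Yang hypothesis rules out zeros of $Z_\Lambda$ in $\mathbb{H}_+$; and the infinite-volume connected correlations provided by Theorem 7 of \cite{FR12} are analytic on $\mathbb{H}_+$ and admit bounds uniform in $x$ on each compact $K \subset \mathbb{H}_+$. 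Since $\gamma_{\textnormal{c}} \cup \gamma_{\textnormal{r}}$ is a compact subset of $\mathbb{H}_+$ (at distance $\geq \delta$ from the imaginary axis and bounded in modulus by $\eta/\cos\alpha$), this furnishes the desired constant, and combining the two estimates proves \eqref{eq:gamma_r_bound}.

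The main obstacle I anticipate is the uniform-in-$x$ part of the correlation bound itself: naive finite-volume modulus estimates produce ratios of partition functions $\tilde{Z}_\Lambda(\Re z)/|\tilde{Z}_\Lambda(z)|$ that may grow exponentially in $|\Lambda|$, so one cannot just pull absolute values under the Gibbs integral and pass to the thermodynamic limit. The required uniformity must instead be extracted from the Lee-Yang-based infinite-volume construction of \cite{FR12}, or, alternatively, from a Phragm\'en-Lindel\"of argument in $\mathbb{H}_+$ that propagates the trivial bound $|\langle \varphi_0^i;\varphi_x^j \rangle_h| \leq 2R^2$ available on the positive real axis into the compact region $\gamma_{\textnormal{c}} \cup \gamma_{\textnormal{r}}$.
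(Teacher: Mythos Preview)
Your proposal is correct and follows essentially the same approach as the paper: the geometric bound on the prefactor is identical, and the reduction to a uniform-in-$x$ bound on $|\langle \varphi_0^i;\varphi_x^j\rangle_z^\Phi|$ over the compact set $\gamma_{\textnormal{c}}\cup\gamma_{\textnormal{r}}$, extracted from the construction in \cite{FR12}, is exactly what the paper does. The paper makes the mechanism for $x$-uniformity explicit where you leave it as a citation: one writes $\langle \varphi_0^i;\varphi_x^j\rangle_z^\Phi = \int_K e^{ik\cdot x}\,\widehat{G}_z(k)\,d^dk$ with $\widehat{G}_z(k)$ a second derivative of the infinite-volume free energy $f_\infty$ from \cite{FR12}; since $f_\infty$ (hence $\widehat{G}_z$, by Cauchy's formula on polydiscs) is bounded on compact subsets, the $x$-uniformity follows immediately from $|e^{ik\cdot x}|=1$.
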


\begin{proof}
First, observe that, if $z \in \gamma_{\textrm{r}}$, then $z = re^{\pm i \alpha}$, for some $r\in [\delta,\eta]$, and the choice $\pm \alpha$ depends on which line segment $z$ belongs to. Regardless, for all $\varepsilon > 0$, $x\in \mathbb{Z}^d$ and $z \in \gamma_{\textrm{r}}$, one has that
$e^{\varepsilon z^{\pi / 2 \alpha}|x|} = e^{\pm i \varepsilon r^{\pi / 2 \alpha}|x|}$, 
i.e., the prefactor appearing in the definition of $\mathscr{F}_x^{\varepsilon, \alpha}$ in \eqref{eq:F} has modulus one. Moreover, when $z \in \gamma_{\textrm{c}}$, writing $z = \delta e^{i\theta}$ for suitable $|\theta|\leq \alpha < \pi/2$, one obtains $|e^{\varepsilon z^{\pi / 2 \alpha}|x|}|
\leq e^{\varepsilon \delta^{\pi/2\alpha}|x|} $. Thus, in order to prove \eqref{eq:gamma_r_bound}, it suffices to show that
\begin{equation}\label{eq:gamma_r_bound1}
\sup_{x\in \mathbb{Z}^d}  \sup_{z \in \gamma_{\textrm{r}}\cup  \gamma_{\textrm{c}}} |\langle {\varphi}_0^i \, ; \,  {\varphi}_x^j \rangle^{\Phi}_{z}| < \infty.
\end{equation}
Note that the left-hand side depends implicitly on $\alpha$, $\delta$ and $\eta$ through the choice of 
$\gamma_{\textrm{r}} \cup \gamma_{\textrm{c}}$. The bound \eqref{eq:gamma_r_bound1} results from a slight extension of the arguments in \cite{FR12}, which we briefly explain. It follows from the proof of Theorem $7$ in \cite{FR12} that
\begin{equation} \label{eq:gamma_r_bound1.1}
\langle {\varphi}_0^i \, ; \,  {\varphi}_x^j \rangle^{\Phi}_{z} = \int_K d^dk e^{i k\cdot x} \widehat{G}_z(k),
\end{equation}
where $K=[0,2\pi)^d$ and the Fourier transform $\widehat{G}_z(k)$ (which depends implicitly on $i$ and $j$) is obtained as
\begin{equation}\label{eq:gamma_r_bound2}
\widehat{G}_z(k) = \frac{\partial f_{\infty}(\xi, k)}{\partial \varepsilon_1 \partial \varepsilon_2}\Big|_{\varepsilon_1 = \varepsilon_2=0}, \quad \text{with} \  \xi = (z,\varepsilon_1, \varepsilon_2) \in \Xi =  \{ \Re \, z > |\varepsilon_1|+|\varepsilon_2 | \} \subset \mathbb{C}^3 
\end{equation}
where $ f_{\infty}$ is a generalized infinite-volume ``free energy''. The precise construction of $f_{\infty}$ in \cite{FR12} is of no concern for the present purposes; but we note that $f_{\infty}$ is obtained by performing a thermodynamic limit, $f_{\infty}= \lim_{L\to \infty} f_{\mathbb{T}_L}$, of suitable finite-volume ``free energies'' $f_{\mathbb{T}_L}$ indexed by $\mathbb{T}_L$, the discrete torus with sides of length $L$. (We recall that, for the arguments used in \cite{FR12} to be valid, we must require periodic boundary conditions.) The functions $f_{\mathbb{T}_L}$ are  analytic on $\Xi$, for every $k \in K$. Careful inspection of the proof of Theorem $7$ in \cite{FR12} reveals that $\sup_{L \geq 1} \sup_{(\xi,k)} f_{\mathbb{T}_L}(\xi,k) < \infty$ for $(\xi,k)$ ranging over arbitrary compact subsets of $ \Xi \times K$, and that the convergence to $f_{\infty}$ is uniform on such sets. Thus, $f_\infty$ is (jointly) analytic in $\xi \in \Xi$, for every $k \in K$, and, moreover, $\sup_{(\xi,k)} f_{\infty}(\xi,k) < \infty$, on compact subsets. Using Cauchy's integral formula for polydiscs, this uniform bound carries over to derivatives of $f_{\infty}$. In particular, in view of \eqref{eq:gamma_r_bound2}, and since $(\gamma_{\textrm{c}} \cup \gamma_{\textrm{r}})\times K$ is compact, it follows that
$$
\sup_{(z,k) \in ( \gamma_{\textrm{c}} \cup \gamma_{\textrm{r}})\times K}  |\widehat{G}_z(k)| < \infty.
$$
Together with \eqref{eq:gamma_r_bound1.1} this immediately yields \eqref{eq:gamma_r_bound1}, and completes the proof of Lemma \nolinebreak \ref{L:gamma_r_bound}.
\end{proof}
\noindent \textit{Remark:} A finer analysis shows that the function $\widehat{G}_z(k)$ is continuous in $(z,k) \in \mathbb{H}_+ \times K$, and that the continuity is uniform on compact subsets. 

\medskip

The estimate \eqref{eq:gamma_r_bound} is complemented by a uniform bound valid for magnetic fields with a sufficiently large real part (as parametrized by $\eta$ and $\alpha$). Recall the definition of $\gamma_{\textrm{v}}=\gamma_{\textrm{v}} (\alpha, \eta) $ above \eqref{eq:F}.

\begin{lem}\label{L:gamma_c_bound} (Large-field estimate).

\medskip
\noindent There exists $c_1 \in [1, \infty) $ with the following properties: \\
 i) Under assumption \eqref{eq:cond_Laplace}, for all $u_0\geq c_1$, with $\tilde{\alpha} = \tilde{\alpha}(u_0) \in (0,\frac \pi 2)$, $\tilde{u}= \tilde{u}(u_0)>u_0$ $($cf. \eqref{eq:cond_Laplace}, and below, for the definition of $\tilde{\alpha}(\cdot)$, $\tilde{u}(\cdot))$ and $\gamma_{\textnormal{v}} = \gamma_{\textnormal{v}}(\tilde{\alpha}, \tilde{u})$, one has that
\begin{equation}\label{eq:large_field1}
\sup_{x \in \mathbb{Z}^d} \sup_{z \in \gamma_{\textnormal{v}}}|\mathscr{F}_x^{m_0}(z)| < c_2,
\end{equation}
for some $m_0>0$ and a possibly large, but finite constant $c_2$.\\
ii) Under the (stronger) assumption \eqref{eq:cond_Laplace2}, the bound in \eqref{eq:large_field1} holds uniformly for all $z$  satisfying $\Re \, z \geq c_1$.
\end{lem}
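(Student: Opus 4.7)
The plan is to develop a large-field polymer expansion of the partition function $Z^\Phi_\Lambda(z)$, in the spirit of \cite{Sp74}, and then to transfer the resulting real-axis estimates to $\gamma_{\mathrm{v}}$ by means of condition \eqref{eq:cond_Laplace}. For $z=u$ real and sufficiently large, one factors out the single-site normalization and writes
\[
\frac{Z^\Phi_\Lambda(u)}{\hat\mu_0(u)^{|\Lambda|}} = \int \prod_{\{x,y\}:\, |x-y|\leq r}\bigl(1+g_{xy}(\varphi_x,\varphi_y)\bigr)\prod_{x\in \Lambda}d\tilde\nu_u(\varphi_x),
\]
where $g_{xy}=\exp(\sum_i J^i_{xy}\varphi^i_x\varphi^i_y)-1$ and $d\tilde\nu_u = \hat\mu_0(u)^{-1}e^{ut^1}d\mu_0$. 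By compactness of $\mathrm{supp}\,\mu_0$, the probability measure $d\tilde\nu_u$ concentrates on the submanifold where $t^1$ attains its supremum $M$, as $u\to\infty$. Expanding each bond factor around the ``classical'' fully magnetized configuration -- equivalently, performing a low-fugacity expansion in the set of ``excited'' sites at which $\varphi^1_x$ is bounded away from $M$ -- produces a polymer representation whose bond activities are controlled by a small parameter $\varepsilon_u\searrow 0$.

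First I would verify a Kotecky-Preiss convergence criterion for this expansion, uniformly in $\Lambda$, for $u\geq c_1$ with $c_1$ chosen sufficiently large in terms of $\mu_0$ and $J$. The connected two-point function then admits an absolutely convergent representation
\[
\langle \varphi_0^i\,;\,\varphi_x^j \rangle^\Phi_{\Lambda,u} = \sum_{\substack{C\ni 0,x\\ C\ \mathrm{connected}}} w_C(u),\qquad |w_C(u)|\leq \varepsilon_u^{|C|}.
\]
Assumption \eqref{eq:cond_J} forces any connected $C$ containing both $0$ and $x$ to consist of at least $\lceil |x|/r\rceil$ bonds; combining this with the standard entropy bound on the number of connected subgraphs of given size rooted at $0$ yields a uniform estimate $|\langle \varphi_0^i;\varphi_x^j\rangle^\Phi_{\Lambda,u}|\leq C'e^{-m_1(u)|x|}$, with $m_1(u)>0$, which persists in the thermodynamic limit.

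To extend the bound to $\gamma_{\mathrm{v}}$, observe that, for $z=u+iv$, the complex measure $d\nu_z$ satisfies $|d\nu_z| = \bigl(\hat\mu_0(u)/|\hat\mu_0(z)|\bigr)\cdot d\tilde\nu_u$. Condition \eqref{eq:cond_Laplace} guarantees, for any $u_0\geq c_1$, the existence of $\tilde u = \tilde u(u_0)>u_0$ and $\tilde\alpha = \tilde\alpha(u_0)\in(0,\pi/2)$ such that this density is bounded by $\kappa(\tilde\alpha)$ uniformly along $\gamma_{\mathrm{v}}(\tilde\alpha,\tilde u)$. In the polymer expansion, every integration against $d\nu_z$ then picks up a multiplicative factor of at most $\kappa(\tilde\alpha)$; enlarging $u_0$ (and hence $\tilde u$) makes $\varepsilon_{\tilde u}$ small enough that the Kotecky-Preiss condition survives this substitution, so the expansion remains convergent along $\gamma_{\mathrm{v}}$ with a possibly smaller but strictly positive rate $m_0$. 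This yields \eqref{eq:large_field1}. Under \eqref{eq:cond_Laplace2}, the bound $\kappa(\tilde\alpha)\leq \kappa$ is uniform in $\tilde\alpha\in(0,\pi/2)$, so the same argument applies for arbitrary $\tilde\alpha$; selecting, for each $z$ with $\Re z\geq c_1$, an appropriate pair $(\tilde\alpha,\tilde u)$ such that $z\in \gamma_{\mathrm{v}}(\tilde\alpha,\tilde u)$ extends the estimate to the entire half-plane $\{\Re z\geq c_1\}$, giving part (ii).

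The main obstacle is identifying and controlling the small parameter $\varepsilon_u$. Since each bond factor $1+g_{xy}$ remains $O(1)$ as $u\to\infty$ (the coupling $J$ is fixed and $\varphi$ bounded), smallness cannot come from the interaction itself; it must be extracted from the concentration of $d\tilde\nu_u$ at the maximizers of $t^1$, essentially treating deviations as a low-density gas of ``defects''. A secondary technical point is that the Kotecky-Preiss condition must remain stable when $d\tilde\nu_u$ is replaced by the complex measure $d\nu_z$, which is precisely why $c_1$ must be chosen large enough that the real-axis expansion retains sufficient margin to absorb the factors $\kappa(\tilde\alpha)$ arising along $\gamma_{\mathrm{v}}$.
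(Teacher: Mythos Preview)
Your overall strategy matches the paper's: a large-field polymer/cluster expansion, with concentration of $d\tilde\nu_u$ at the maximizer of $t^1$ supplying the small parameter, condition \eqref{eq:cond_Laplace} (resp.\ \eqref{eq:cond_Laplace2}) controlling the passage from real to complex $h$ via the density bound $|d\nu_z|\leq \kappa(\tilde\alpha)\,d\tilde\nu_u$, and the finite-range assumption \eqref{eq:cond_J} forcing any connected cluster containing $0$ and $x$ to have size $\gtrsim |x|/r$. The way parts i) and ii) are separated is also the same.

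Where you and the paper part ways is in \emph{how} the small parameter is extracted. You correctly note that your bond factor $g_{xy}=\exp(\sum_i J^i_{xy}\varphi^i_x\varphi^i_y)-1$ stays $O(1)$ as $u\to\infty$, and you propose to recover smallness through a ``low-density defect gas'' in the set of excited sites. The paper avoids any site-level decomposition by a simpler device: it introduces a free parameter $\tau$ and replaces the bond factor by
\[
\mu^\tau_{\{x,y\}}(\varphi)=\exp\Bigl(\sum_k J^k_{xy}\bigl(\varphi^k_x\varphi^k_y-\tau^2\delta_{k,1}\bigr)\Bigr)-1,
\]
absorbing the harmless constant $e^{\tau^2\sum J^1_{xy}}$ into the normalization (only $\log Z$ enters via \eqref{eq:cluster1}). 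With $\tau$ chosen just below $\|\mu_0\|_\infty$, the concentration of $d\tilde\nu_u$ gives $\mathbb{E}_{\tilde\nu_u}\bigl[|\varphi^k_x\varphi^k_y-\tau^2\delta_{k,1}|\bigr]\to 0$, so the \emph{subtracted} bond activities are themselves small; a tree-graph inequality (Brydges) then yields the Kotecky--Preiss-type bound \eqref{eq:cluster_conv} directly. Your phrase ``expanding each bond factor around the classical fully magnetized configuration'' is precisely this idea, but you then treat it as equivalent to an excited-site expansion, which it is not: the $\tau$-subtraction keeps everything a pure bond-polymer expansion and sidesteps the combinatorics of partitioning sites into ground-state and defect regions. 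Either route should ultimately work, but the paper's is shorter and makes the dependence on $\kappa(\tilde\alpha)$ transparent (it enters only through the factor $M_h=\hat\mu_0(\Re h)/|\hat\mu_0(h)|$ multiplying each single-spin integration).
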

The proof of \eqref{eq:large_field1} is based on a large-field cluster expansion and is postponed to Section \nolinebreak \ref{S:cluster}. For now, we complete the proof of positivity of the mass gap, see \eqref{eq:main}, using Lemmas \ref{L:gamma_r_bound} and \ref{L:gamma_c_bound}. We assume that, along with conditions (C1) and (C2), (C1') holds. We then pick an arbitrary $h>0$ that will be kept fixed in the following argument.  We must then specify the parameters $\alpha$, $\delta$ and $\eta$ defining the domain $\Sigma_{\alpha}$  (cf. \eqref{eq:delta_eta} and below), as well as the constant $\varepsilon$ in \eqref{eq:F}. We set
\begin{equation}\label{eq:otherparams}
\begin{split}
&\text{      } \delta = \frac{1}{10} \wedge \frac{ h}{2}, \, \eta= \tilde{u}( h \vee c_1)  \text{ and } \alpha= \tilde{\alpha}( h \vee c_1) \\&\text{(with $ \tilde{u}(\cdot),  \tilde{\alpha}(\cdot)$ and $c_1$ as appearing in Lemma \ref{L:gamma_c_bound}, i)).}
\end{split}
\end{equation}
This completely specifies the domain $\Sigma_{\alpha}$. We note that the point $h$ belongs to its interior. Setting 
\begin{equation}\label{eq:def_epsilon}
\begin{split}
&\varepsilon =   m_0 \cdot [ \textstyle \sup_{z \in \gamma_{\textrm{v}}} | \Re \, (z^{\pi / 2 \alpha})| ]^{-1} \wedge 1 \ (>0),
\end{split}
\end{equation}
one finds that
$$
\sup_{z \in \gamma_{\textrm{v}}}|\mathscr{F}_x^{\varepsilon,\alpha}(z)| \stackrel{\eqref{eq:F},\eqref{eq:def_epsilon}}{\leq}   \sup_{z \in \gamma_{\textrm{v}}} |\mathscr{F}_x^{m_0}(z)| \stackrel{\eqref{eq:large_field1}}{<}c_2, \text{ for all } x \in \mathbb{Z}^d.
$$
By Lemma \ref{L:gamma_r_bound}, a bound of the form $c(h)\exp (\varepsilon \delta^{\pi/2\alpha}|x|)$ holds along $\gamma_{\textrm{r}} \cup \gamma_{\textrm{c}}$; (note that $\varepsilon$, $\delta$ and $\alpha$ are all functions of $h$). With Lemma \ref{L:gamma_c_bound}, this implies that all members of the family 
$\{\mathscr{F}_x^{\varepsilon,\alpha}(\cdot) \}_x$ satisfy a bound on all of $\partial \Sigma_{\alpha}$ of the same form. But since each function $\mathscr{F}_x^{\varepsilon,\alpha}(\cdot)$, $x \in \mathbb{Z}^d$, is analytic in $\mathbb{H}_+ \supset \supset \Sigma_{\alpha}$, the maximum principle implies that this bound also holds in the interior of $\Sigma_{\alpha}$, i.e., that $\sup_{x\in \mathbb{Z}^d} \sup_{z \in \Sigma_{\alpha}} \exp (-\varepsilon \delta^{\pi/2\alpha}|x|)\cdot |\mathscr{F}_x^{\varepsilon,\alpha}(z) |\leq c(h).$ In particular, one may choose $z=h \in \Sigma_{\alpha}$ to conclude that 
$\exp (-\varepsilon \delta^{\pi/2\alpha}|x|) \cdot|\mathscr{F}_x^{\varepsilon,\alpha}(h) |\leq c(h)$, for all $x \in \mathbb{Z}^d$. Taking the logarithms of both sides of this inequality, then dividing by $|x|$ and taking $|x|\to \infty$, one deduces that 
\begin{equation}\label{estimate_final1}
- \limsup_{|x|\to \infty} \frac{1}{|x|} \log |\langle {\varphi}_0^i \, ; \,  {\varphi}_x^j \rangle^{\Phi}_{h}| \geq \varepsilon\cdot(h^{\pi/2\alpha} - \delta^{\pi/2\alpha}) \stackrel{\eqref{eq:otherparams}}{\geq} c  \varepsilon h^{\pi/2\alpha},
\end{equation}
which is strictly positive.
Since $i$ and $j$ have only finitely many possible values, cf. \eqref{eq:mass_gap}, one concludes that \eqref{eq:main} holds for all $h>0$.

Next, we consider an arbitrary $h \in \mathbb{H}_+$ and assume that \eqref{eq:cond_Laplace2} holds. We propose to explain how to adapt the above arguments to this situation. We first suppose that $0 < \Re \, h < c_1$, with $c_1$ as in Lemma \ref{L:gamma_c_bound}. We then choose
\begin{equation}\label{eq:final_def_param}
\delta = \frac{1}{10} \wedge \Re (h^{\pi/2\alpha}) \text{   and  }\eta =  c_1,
\end{equation}
for some $\alpha \in (0,\pi/2)$ to be specified shortly. Clearly, there exists an $\alpha_0(h) \in (0, \frac{\pi}{2})$ such that $h \in \Sigma_{\alpha}$, for all $\alpha \geq \alpha_0(h)$. We pick an \textit{arbitrary} such $\alpha$. 
The $\alpha$-dependence of all quantities will henceforth be displayed explicitly. With these choices of parameters, the boundary $\gamma_{\textrm{v}}= \gamma_{\textrm{v}}(\alpha, \eta)$ can be parametrized as $ \frac{c_1}{\cos \theta} e^{i\theta}$, with $|\theta| \leq \alpha$. The definition of $\varepsilon$ in \eqref{eq:def_epsilon} can be recast as
\begin{equation}\label{eq:epsilon_gen}
\varepsilon = 1 \wedge m_0 \cdot  \Big\{ c_1^{\pi/2\alpha} \sup_{\theta: \, |\theta|\leqslant \alpha} \Big| \frac{\cos(\frac{\theta\pi}{2\alpha})}{[\cos(\theta)]^{\pi/2\alpha}}\Big| \Big\}^{-1} \ (= \varepsilon(\alpha)).
\end{equation}
Note that $\varepsilon$ depends on $h$ only through $\alpha$, via the constraint $\alpha \geq \alpha_0(h)$. Repeating previous arguments, one then shows that 
$$\sup_{x\in \mathbb{Z}^d} \sup_{z \in \Sigma_{\alpha}} \exp (-\varepsilon \delta^{\pi/2\alpha}|x|) \cdot |\mathscr{F}_x^{\varepsilon,\alpha}(z) |\leq c_3(\alpha, h).$$ 
Setting $z=h \, (\in \Sigma_{\alpha})$, one obtains the following lower bound, recorded here for later purposes: For all $h$ with $0< \Re \, h < c_1$, and for all $\alpha \geq \alpha_0(h)$,
\begin{equation}\label{eq:posit_explicit}
- \limsup_{|x|\to \infty} \frac{1}{|x|} \log |\langle {\varphi}_0^i \, ; \,  {\varphi}_x^j \rangle^{\Phi}_{h}| \stackrel{\eqref{eq:final_def_param}}{\geq}  c \cdot \varepsilon (\alpha) \cdot \rho(1- \rho^{\frac{\pi}{2\alpha}-1}), \text{ where $\rho =  \Re ( h^{\frac{\pi}{2\alpha}})$},
\end{equation}
with $\varepsilon = \varepsilon (\alpha)$ as in \eqref{eq:epsilon_gen}. Note that the constant $c_3$ does \textit{not} appear in this lower bound, which will be important below. Moreover, if $\Re \, h \geq c_1$, Lemma \ref{L:gamma_c_bound}, ii) immediately implies that the quantity on the left-hand side of \eqref{eq:posit_explicit} is bounded from below by $m_0$. Strict positivity of the mass gap in the entire region $\mathbb{H}_+$ then follows. This completes the proof of Theorem \ref{T:main}, given Lemma \ref{L:gamma_c_bound}.
\end{proof}

The astute reader will have remarked that $\alpha$ is an essentially free parameter in the previous argument. This yields the following

\begin{corollary} \label{C:main2} (Bound on a critical exponent).

\medskip
\noindent Under assumptions \eqref{eq:cond_mu_0}, \eqref{eq:cond_Laplace2} and \eqref{eq:cond_J}, 
\begin{equation} \label{eq:main2}
\inf_{\Re\, h \neq 0} \frac{m(h)}{|\Re \, h|} > 0. 
\end{equation} 
\end{corollary}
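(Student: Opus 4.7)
\medskip
\noindent \emph{Proof plan.} As signalled in the paragraph preceding the statement, the key observation is that $\alpha$ enters the proof of Theorem \ref{T:main} only through the constraint $\alpha \in [\alpha_0(h), \pi/2)$ with $\alpha_0(h) = |\arg h|$; in particular, the lower bound \eqref{eq:posit_explicit},
\[
m(h) \;\geq\; c \cdot \varepsilon(\alpha) \cdot \rho \,\bigl(1 - \rho^{\pi/(2\alpha) - 1}\bigr), \qquad \rho = \Re\bigl(h^{\pi/(2\alpha)}\bigr),
\]
holds for every admissible $\alpha$. The plan is to optimise $\alpha = \alpha(h)$ depending on $h$ so as to upgrade this to a bound linear in $|\Re h|$.

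Setting $\pi/(2\alpha) = 1 + s$ with $s > 0$ small and writing $h = re^{i\theta}$, one has $\rho = r^{1+s}\cos\bigl((1+s)\theta\bigr)$. I would pick $s = s(h)$ so that, simultaneously, (i) $\rho^s$ is bounded away from $1$; (ii) $\rho$ is comparable to $\Re h = r\cos\theta$; and (iii) $\varepsilon(\alpha)$ stays bounded below by a positive constant. When $|\theta|$ is bounded away from $\pi/2$ and $r$ is small, the natural choice $s = \log 2 \cdot |\log r|^{-1}$ yields $r^s = 1/2$, hence $\rho \approx (\Re h)/2$ and $\rho^s \approx 1/2$; a glance at \eqref{eq:epsilon_gen} shows that $\varepsilon(\alpha)$ converges to a positive limit as $\alpha \nearrow \pi/2$, so (iii) holds. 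Combined, these yield $m(h) \gtrsim |\Re h|$ in this part of the parameter space.

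Two complementary regimes require separate treatment. For $|\arg h|$ near $\pi/2$, the constraint $\alpha > |\theta|$ forces $s$ to be small and $\rho^s$ to approach $1$; here I would use the refined estimate $\rho(1 - \rho^s) \approx \rho \cdot s\,|\log\rho|$ obtained by Taylor-expanding $\rho^s$ around $s = 0$, together with a $\theta$-adapted choice of $s$ exploiting the largeness of $|\log\rho|$ when $\Re h$ is small. For $\Re h \geq c_1$, I would re-run the large-field cluster expansion of Section \ref{S:cluster} at scale $u_0 \sim \Re h$: under \eqref{eq:cond_Laplace2}, this should produce a decay rate proportional to $\Re h$, giving $m(h) \gtrsim \Re h$ in that regime as well. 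Since \eqref{eq:mass_gap} involves only finitely many component pairs, the corollary then follows by taking the infimum over $h$ with $\Re h \neq 0$.

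The main obstacle is the case of very small $|\Re h|$ with $|\arg h|$ close to $\pi/2$: the constraint $\alpha > |\arg h|$ tightens in this limit, and preserving (i)--(iii) simultaneously demands a delicate joint optimisation in $r$ and $\theta$, relying on the first-order cancellation encoded in $1 - \rho^s \approx s|\log\rho|$. A secondary concern is verifying that the cluster expansion of Section \ref{S:cluster}, rerun at scale $u_0 \sim \Re h$, indeed delivers a mass gap that grows at least linearly (and not merely logarithmically) in $\Re h$.
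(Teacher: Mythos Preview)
Your plan is considerably more elaborate than the paper's argument, which runs to three sentences. The paper simply observes that $\phi_\alpha(\theta) := \cos(\theta\pi/2\alpha)/\cos(\theta)^{\pi/2\alpha}$ satisfies $\phi_\alpha(0)=1$, $\phi_\alpha(\alpha)=0$ and is monotone on $[0,\alpha]$, hence is uniformly bounded; by \eqref{eq:epsilon_gen} this gives $\inf_\alpha \varepsilon(\alpha)>0$. It then substitutes into \eqref{eq:posit_explicit} and ``lets $\alpha\nearrow\pi/2$''. No $h$-dependent optimisation of $\alpha$, no separate treatment of regimes in $\theta$, no re-running of the cluster expansion.

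That said, your instinct that something is delicate here is correct. Taken literally, \eqref{eq:posit_explicit} gives $m(h)\geq c\,\varepsilon(\alpha)\,\rho(1-\rho^{\pi/2\alpha-1})$, and the factor $1-\rho^{\pi/2\alpha-1}$ does tend to $0$ as $\alpha\to\pi/2$ --- so the paper's limiting step is at best elliptic. But the remedy is \emph{not} your $h$-dependent choice $s=\log 2/|\log r|$: it is simply to recall that $\delta$ in \eqref{eq:final_def_param} is a free parameter. For any $\delta\in(0,|h|)$ one has $m(h)\geq\varepsilon(\alpha)(\rho-\delta^{\pi/2\alpha})$, and the $\delta$-dependent constant $c_3$ drops out of the exponential rate. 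Letting $\delta\downarrow 0$ first gives $m(h)\geq\varepsilon(\alpha)\rho$; then letting $\alpha\nearrow\pi/2$ gives $m(h)\geq c\,\Re h$ directly, for every $h$ with $0<\Re h<c_1$. This disposes of your ``main obstacle'' (the regime $|\arg h|\to\pi/2$) and your entire case analysis at a stroke.

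Your ``secondary concern'' about $\Re h\geq c_1$ is genuine: the cluster expansion of Section~\ref{S:cluster} as stated yields only $m(h)\geq m_0$ with $m_0$ fixed, not growing in $\Re h$, so the infimum \eqref{eq:main2} over \emph{all} $h$ with $\Re h\neq 0$ is not obviously positive at large fields. The paper's proof does not address this either; the corollary is presumably intended as a statement about the behaviour of $m(h)/|\Re h|$ as $\Re h\to 0$, consistent with its title and the abstract's reference to a ``mean-field lower bound on $m(h)$, as $h\searrow 0$''.
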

\begin{proof}
Let $h \in \mathbb{H}_+$. The case $h \in \mathbb{C} \setminus \overline{ \mathbb{H}}_+$ is handled similarly. Let $\phi_{\alpha}(\theta) = \frac{\cos(\frac{\theta \pi}{2\alpha})}{\cos( \theta)^{{\pi}/{2\alpha}}} = \phi_{\alpha}(-\theta)$, with $\theta \in [-\alpha, \alpha]$ for some $\alpha < \frac{\pi}{2}$. One readily verifies that $\phi_{\alpha}(0)=1$, $\phi_{\alpha}(\alpha)=0$ and $\phi_{\alpha}(\cdot)$ is decreasing on $[0,\alpha]$. Therefore, $\lim_{\alpha \nearrow \frac \pi 2} \sup_{ \theta \in [0, \alpha]} |\phi_{\alpha}(\theta)|  < \infty$. It is then plain from \eqref{eq:epsilon_gen} that $\inf_{\alpha \in (0, \frac \pi 2)} \varepsilon (\alpha) > 0$. Substituting this into \eqref{eq:posit_explicit}, then letting $\alpha \nearrow \frac \pi 2$, the claim \eqref{eq:main2} follows.
\end{proof}


\section{Cluster Expansion} \label{S:cluster}

In this section we sketch a proof of Lemma \ref{L:gamma_c_bound}. We remind the reader that the limit in \eqref{eq:2_pt} exists for \textit{all} boundary conditions not invalidating the Lee-Yang theorem. This class includes, in particular, \textit{free} boundary conditions, which we will impose throughout this section. (But periodic boundary conditions can be used, too.)
\medskip

\noindent \textit{Proof of Lemma \ref{L:gamma_c_bound}}. For convenience, we will use the notation $\langle \cdot \rangle_{0,h}$ to denote an average with respect to $d\nu_{h}(\varphi) = \prod_{x\in \mathbb{Z}^d} d\nu_{h} (\varphi_x)$, (cf. after \eqref{eq:Z}). We begin by setting up a formalism well-suited to describe the cluster expansion. We first note that, for arbitrary $\Lambda \subset \subset \mathbb{Z}^d$ with $\Lambda \supset \{0,x\}$ and $h \in \mathbb{H}_+$,
\begin{equation}\label{eq:cluster1}
\langle {\varphi}_0^i \, ; \,  {\varphi}_x^j \rangle^{\Phi}_{\Lambda, h} = \frac{\partial^2}{\partial s \, \partial t} \log {Z}^{\tau}_{\Lambda, h}((1+s\varphi_0^i)(1+t\varphi_x^{j}))\Big|_{s,t=0}
\end{equation}
where, for a parameter $\tau > 0 $ to be chosen later,
\begin{equation}\label{eq:cluster2}
{Z}^{\tau}_{\Lambda, h}(A) = \Big\langle A \prod_{\{x,y \} \in \mathcal{B}_{\Lambda}} \exp\Big[ \sum_{1\leqslant k \leqslant N} J_{xy}^k(\varphi_x^k \varphi_y^k -\tau^2 \delta_{k,1}) \Big] \Big\rangle_{0,h}.
\end{equation} 
In  \eqref{eq:cluster2}, $A$ is an arbitrary bounded function of the spins in a finite subset of the lattice,
$\delta_{k,1}$ denotes the Kronecker delta, and $\mathcal{B}_{\Lambda}$ is the set of two-element subsets (bonds), $\{x,y\}$,  of $\Lambda$. 
Eq. \eqref{eq:cluster1} holds regardless of the value of $\tau$, because ${Z}^{\tau}_{\Lambda, h}$ can be obtained from ${Z}_{\Lambda, h}= {Z}^{0}_{\Lambda, h}$, defined in \eqref{eq:Z}, by multiplication with a constant independent of $s$ and $t$. We will expand ${Z}^{\tau}_{\Lambda, h}(A)$, for $A= (1+s\varphi_0^i)(1+t\varphi_x^{j})$, in terms of the ``small'' quantities,
\begin{equation}\label{eq:cluster3}
\mu_{X}^{\tau}(\varphi) \stackrel{\text{def.}}{=}
\begin{cases}
s\varphi_0^i, & X=\{0 \}, \\
t\varphi_x^j, & X=\{x \}, \\
e^{\Phi_{\tau}(X)(\varphi)}-1, & |X|=2, 
\end{cases}
\end{equation}
where $\Phi_{\tau}(X) =\sum_{1\leqslant k \leqslant N}\Phi^k_{\tau}(X)$, with $\Phi^k_{\tau}(\varphi_{\{x,y\}})(\varphi):= J_{xy}^k(\varphi_x^k \varphi_y^k -\tau^2 \delta_{k,1})$. Note that $\mu_{X}$ implicitly depends on $s$ and $t$. Introducing the enhanced set of bonds $\widetilde{\mathcal{B}}_{\Lambda} = \mathcal{B}_{\Lambda} \cup \{0\} \cup\{x\}$ -- the reader may want to think of the latter as loop edges attached to $0$ and $x$ -- one shows that $${Z}^{\tau}_{\Lambda, h}((1+s\varphi^{i}_0)(1+t\varphi^{j}_x)) = \langle \prod_{X \in \widetilde{\mathcal{B}}_{\Lambda}} [ 1+ \mu_X^{\tau}(\varphi)] \rangle_{0,h}.$$ 
We introduce a family of polymers $\Gamma:= \{ \zeta \subset \mathbb{Z}^d ; \, 2\leq |\zeta|< \infty\} \cup \{0\} \cup\{x\}$, $\Gamma_{\Lambda}= \{\zeta \in \Gamma : \, \zeta \subset \Lambda\}$, where the ``length'' 
$|\zeta|$ of a polymer $\zeta$ is given by its cardinality. For $\zeta \in \Gamma_{\Lambda}$, let 
$\widetilde{G}_{\zeta}$ be the set of all \textit{connected graphs} with vertex set $\zeta$ and edges belonging to 
$\widetilde{\mathcal{B}}_{\Lambda}$; (loop edges at $0$ and/or $x$ are allowed if $0$ and/or $x$ belong to 
$\zeta$). Each graph $g \in\widetilde{G}_{\zeta}$ is required to contain at least one edge; this is always true if $|\zeta|\geq 2$, and it ensures that the corresponding loop edge is included, in case $\zeta = \{0 \}$ or 
$\{x\}$. Expanding the product over $X \in \widetilde{\mathcal{B}}_{\Lambda}$ in the formula above and factorizing each resulting term into contributions indexed by some polymer $\zeta_i$, with $i=1,...,n$, with the property that $\zeta_{i} \cap \zeta_{j} = \emptyset$, for $i\not= j, i,j=1,...,n$, we obtain that
\begin{equation}\label{eq:cluster4}
{Z}^{\tau}_{\Lambda, h}(A)= 1 + \sum_{n \geqslant 1}\frac{1}{n!} \sum_{\substack{\zeta_1,\dots, \zeta_n \in \Gamma_{\Lambda} \\ \text{pairwise disjoint}}} \prod_{1\leq k \leq n} z_h^{\tau}(\zeta_k),
\end{equation}
with 
\begin{equation} \label{eq:cluster5}
z_h^{\tau}(\zeta)= \Big\langle \sum_{g \in \widetilde{G}_{\zeta}} \prod_{X \in E(g)} \mu_{X}^{\tau}(\varphi) \Big\rangle_{0,h}.
\end{equation}
In these formulas, $A=(1+s\varphi^{i}_0)(1+t\varphi^{j}_x)$ and $E(g)$ denotes the set of edges of the graph $g$, which is a subset of $\widetilde{\mathcal{B}}_{\Lambda}$. We mention that the sum in \eqref{eq:cluster4} is finite, so there are no issues with convergence. It follows that 
${Z}^{\tau}_{\Lambda, h}(A)$ can be regarded as the (grand) partition function of a gas of polymers with hard-core exclusion and activities given by $z_h^{\tau}(\cdot)$. 
The crucial ingredient in this analysis is the following estimate. Recall the definition of the functions 
$\tilde{\alpha}: (0,\infty) \to (0, \frac{\pi}{2})$ and $\tilde{u}: (0,\infty) \to (0, \infty)$, with $\tilde{u}(t) > t$, for $t> 0$; see \eqref{eq:cond_Laplace}.
\begin{lem} (Smallness of activities). \label{L:eq:clust_conv}

\medskip
\noindent For all $\varepsilon \in (0,1)$, there exist constants $\tau(\varepsilon)>0$, $\eta(\varepsilon) > 0$ and $c(\varepsilon)$ with the following properties:\\
i) Under assumptions \eqref{eq:cond_mu_0}, \eqref{eq:cond_Laplace} and \eqref{eq:cond_J}, and for all $u_0 \geq \eta(\varepsilon)$, 
with $\tilde{\alpha}=  \tilde{\alpha}(u_0)$ and $\tilde{u}=  \tilde{u}(u_0)$,
\begin{equation}\label{eq:cluster_conv}
\sup_{y\in \mathbb{Z}^d} \sum_{\substack{\zeta \in \Gamma: \\ y \in \zeta, |\zeta|=n}}|z_h^{\tau(\varepsilon)}(\zeta)| \leq \varepsilon^n,
\end{equation}
for all $n \geq 1$, $h \in \gamma_{\textnormal{v}}(\tilde{\alpha}, \tilde{u})$, and $|s|, |t| < c(\varepsilon)$.

\medskip
\noindent ii) If, in addition, \eqref{eq:cond_Laplace2} holds then the bound in \eqref{eq:cluster_conv} is valid for all $h\in \mathbb{C}$ satisfying $\Re \, h \geq \eta(\varepsilon)$.
\end{lem}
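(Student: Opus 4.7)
The plan is to carry out a standard large-field cluster expansion: reduce complex $h$ to its real part $u = \Re\,h$ via \eqref{eq:cond_Laplace}, choose $\tau = \tau(\varepsilon)$ so that each edge activity $\mu_X^{\tau}$ has small expectation under the real product measure $d\nu_u^{\otimes \zeta}$, and sum over polymer graphs using a tree-graph bound whose combinatorial factor is controlled by the finite range in \eqref{eq:cond_J}.

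For the reduction step, note that $d\nu_h(\varphi_x) = e^{iv\varphi_x^1}\,\hat{\mu}_0(u)/\hat{\mu}_0(h)\,d\nu_u(\varphi_x)$ for $h = u+iv$, and \eqref{eq:cond_Laplace} along $\gamma_{\textnormal{v}}$ (or \eqref{eq:cond_Laplace2} uniformly) gives the pointwise bound $|d\nu_h| \le \kappa(\tilde{\alpha})\,d\nu_u$. Consequently, for any bounded $F:\Omega^{\zeta}\to\mathbb{C}$,
\[
\Bigl|\!\int F\,d\nu_h^{\otimes\zeta}\Bigr| \le \kappa^{|\zeta|}\!\int |F|\,d\nu_u^{\otimes\zeta}.
\]
I would then choose $\tau(\varepsilon)$ close to the concentration value $\sup\{t^1 : t\in \operatorname{supp}\mu_0\}$. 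Since $\mu_0$ is $O(N)$-invariant, the transverse moments $\langle\varphi^k\rangle_{0,u}$ vanish for $k\ge 2$, and with $\tau$ so chosen the compensator $-J_{xy}^1\tau^2$ in $\Phi_\tau(\{x,y\})$ makes $\Phi_\tau$ nearly ``mean zero'' under $d\nu_u^{\otimes 2}$ when $u$ is large. The concentration of $d\nu_u$ near $\varphi^1\equiv\tau$ as $u\to\infty$ drives $\operatorname{Var}_u(\varphi^1)\to 0$ and $\langle(\varphi^k)^2\rangle_u\to 0$ for $k\ge 2$; combined with strong ferromagnetism $|J_{xy}^k|\le J_{xy}^1$, the elementary bound $|e^z-1|\le|z|e^{|z|}$, and compactness of $\operatorname{supp}\mu_0$, this yields
\[
\varrho(u) := \sup_{|x-y|\le r}\,\langle |\mu_{\{x,y\}}^{\tau}|\rangle_{0,u}^{\otimes 2} \to 0\quad\text{as }u\to\infty.
\]
The loop-edge activities at $\{0\}$, $\{x\}$ are proportional to $|s|$, $|t|$ and are made small by the choice of $c(\varepsilon)$.

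To sum over polymers, I would estimate
\[
|z_h^{\tau}(\zeta)| \le \kappa^n\sum_{g\in\widetilde{G}_\zeta}\int\prod_{X\in E(g)}|\mu_X^{\tau}|\,d\nu_u^{\otimes\zeta}
\]
and apply a tree-graph inequality (Battle--Brydges--Federbush type) to replace the sum over connected graphs by a sum over spanning trees, with non-tree edges absorbed into a uniform multiplicative constant --- a step justified by the a-priori bound $|\mu_X^{\tau}|\le C$ coming from compactness of $\operatorname{supp}\mu_0$ and strong ferromagnetism. A Cauchy--Schwarz splitting across shared vertices of the tree then turns each tree edge into a factor $\le C'\varrho(u)^{1/2}$, and the finite range $r$ restricts the number of pairs $(\zeta,T)$ of size $n$ containing $y$ to at most $C_{d,r}^n$. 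Collecting, $\sum_{\zeta\ni y,\,|\zeta|=n}|z_h^{\tau}(\zeta)|\le(C_{d,r}\kappa\varrho(u)^{1/2})^n$, which is $\le\varepsilon^n$ for $u_0\ge\eta(\varepsilon)$ sufficiently large; part (ii) follows identically using \eqref{eq:cond_Laplace2}. The principal obstacle is the combinatorial blowup: there are up to $2^{\binom{n}{2}}$ connected graphs on $n$ vertices, and without a tree-graph reduction no per-edge smallness could beat this. The uniform bound $|\mu_X^{\tau}|\le C$, ensured by compactness of $\operatorname{supp}\mu_0$ and by absorbing the leading constant into $\tau$, is precisely what makes the tree bound viable.
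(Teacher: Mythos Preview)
Your overall strategy matches the paper's: pass from complex $h$ to $u=\Re\,h$ via \eqref{eq:cond_Laplace} (paying a factor $\kappa^{|\zeta|}$), choose $\tau$ near $\|\mu_0\|_\infty$ so that under the concentrated real measure the edge activities have small expectation, apply a tree-graph inequality, and control the tree combinatorics through the finite range. The one step that does not work as written is the ``Cauchy--Schwarz splitting across shared vertices'' giving a factor $\varrho(u)^{1/2}$ per tree edge. Smallness of $\langle|\mu_{\{x,y\}}^\tau|\rangle_{0,u}$ requires that \emph{both} endpoints be integrated against the concentrated measure (the conditional expectation given $\varphi_x$ is not small uniformly over $\operatorname{supp}\mu_0$, since $\varphi_x^1$ may sit near $-\|\mu_0\|_\infty$), and for a star with $k\ge 3$ leaves no Cauchy--Schwarz argument produces better than an exponent $1/k$ per edge; the bound with exponent $1/2$ is simply false in general.

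The paper closes this by a different device: since finite range forces the maximum degree in any contributing tree to be at most $cr^d$, one can extract from each tree a subset $\hat E(t)\subset E(t)$ of at least $cn$ pairwise \emph{vertex-disjoint} edges. Over $\hat E(t)$ the product‐measure integral factorizes, yielding $cn$ genuinely small factors $\mathbb{E}_u[|\varphi_x^k\varphi_y^k-\tau^2\delta_{k,1}|]\le c\delta$, while the discarded edges in $E(t)\setminus\hat E(t)$ are absorbed into a harmless $e^{cn}$ via the uniform bound $|\mu_X^\tau|\le C$. Replacing your Cauchy--Schwarz step by this extraction (or, equivalently, by a proper edge-colouring of the tree with $D=O(r^d)$ colours followed by H\"older with exponent $D$, which gives $\varrho^{1/D}$ rather than $\varrho^{1/2}$ per edge) repairs the argument; everything else in your outline is correct.
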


\medskip


\noindent \textit{Proof of Lemma \ref{L:eq:clust_conv}.} We prove $i)$ and then indicate the changes necessary to prove $ii)$. Without loss of generality, we may assume that the function $\kappa(\cdot)$ appearing in \eqref{eq:cond_Laplace} is increasing and continuous on $[0, \tilde{\alpha}(u_0)]$, for given $u_0 >0$, with $\kappa(0)= 1$. Thus, after possibly redefining $\tilde{\alpha}$, we may assume that 
\begin{equation}\label{eq:kappa_unif_bound}
\kappa(\tilde{\alpha}(u_0)) \leq 10, \text{ for all $u_0>0$}.
\end{equation}
Next, we note that, for the measure $d\nu_{h}$ on $\Omega = \mathbb{R}^N$ introduced after \eqref{eq:Z}, one has that 
\begin{equation}\label{eq:cluster_conv_11}
d|\nu|_{h}(\mathbf{t})= M_h \cdot \frac{e^{(\Re \, h)t^1} d\mu_0(\mathbf{t})}{ \int e^{(\Re \, h)u^1} d\mu_0(\mathbf{u})}\equiv M_h \cdot d\mathbb{P}_h (\mathbf{t}),
\end{equation}
for all $h \in \mathbb{H}_+$. In this equation, $\mathbb{P}_h$ is a probability measure on $\mathbb{R}^N$ with compact support; see \eqref{eq:cond_mu_0}. Moreover, by conditions \eqref{eq:cond_mu_0} and \eqref{eq:cond_Laplace}, the factors $M_h= \hat{\mu}_0(\Re\, h)/ | \hat{\mu}_0(h)|$ are well-defined and satisfy $M_{\infty} \equiv \sup_{u_0 >0}\sup_{h \in \gamma_{\textnormal{v}}(\tilde{\alpha}, \tilde{u})} M_h \leq 10 \,$, using \eqref{eq:kappa_unif_bound}; (recall that $\hat{\mu}_0(z)= \int \exp(zu^1) d\mu_0 (\mathbf{u})$). \\ 
Let $\varepsilon \in ( 0,1)$ and $u_0 > 0$ be fixed, and consider the left-hand side of \eqref{eq:cluster_conv}. Suppose first that $n=1$, i.e., $\zeta = \{ 0\}$ or $\{x\}$. In this case, the activity does not depend on $\tau$, and 
$$
|z_h(\zeta)|\stackrel{\eqref{eq:cluster5}, \eqref{eq:cluster3}}{\leq}(|s|\vee |t|) \int d|\nu|_{1,h}(\mathbf{u}) |\mathbf{u}| \stackrel{\eqref{eq:cluster_conv_11}}{\leq}(|s|\vee |t|)M_{\infty} \cdot ||\mu_0||_{\infty}, \text{ for  $h \in \gamma_{\textnormal{v}}(\tilde{\alpha}, \tilde{u})$,}
$$
with $||\mu_0||_{\infty}= \sup\{|{\mathbf{u}}| ; \, \mathbf{u} \in \text{supp}(\mu_0)\}$. From this, \eqref{eq:cluster_conv} follows if $n=1$, provided $|s|, |t| < c\varepsilon$, with $c$ small enough.

Next, suppose that $n\geq 2$, and let $\zeta \in \Gamma$, with $|\zeta|=n$. We write $d|\nu|(\varphi):= \prod_{x\in \mathbb{Z}^d} d|\nu_{h}|(\varphi_x)$, which is a \textit{product} measure on 
$\Omega^{\mathbb{Z}^d}$. We first dispense with possible loop edges appearing in the summation over 
$\widetilde{G}_\zeta$ in \eqref{eq:cluster5}. Let $G_{\zeta}$ denote the set of connected graphs with vertex set $\zeta$ and edges belonging to $\mathcal{B}_{\zeta}$, i.e., without loops edges. With each $g\in  \widetilde{G}_\zeta$ one may associate at most four graphs in $\widetilde{G}_\zeta$ obtained by adding all possible sets of loop edges; adding any such set to $g$ merely introduces an extra factor in 
$ \prod_{X \in E(g)} \mu_{X}^{\tau}({\varphi})$ bounded by $(1\vee ||\mu_0||_{\infty})^2$, $d|\nu|$-a.e, for $|s|,|t|<1$. Thus, for arbitrary $y \in \mathbb{Z}^d$, $\tau > 0$, $|s|,|t|<1$ and $h \in \mathbb{H}_+$,
\begin{equation}\label{eq:cluster_conv_12}
\begin{split}
\sum_{\substack{\zeta \in \Gamma: \\ y \in \zeta, |\zeta|=n}}|z_h^{\tau}(\zeta)| 
&\leq 4(1\vee ||\mu_0||_{\infty})^2 \sum_{\substack{\zeta \in \Gamma: \\ y \in \zeta, |\zeta|=n}}   \sum_{g \in {G}_{\zeta}} \int \Big[ \prod_{X \in E(g)} |\mu_{X}^{\tau}({\varphi})| \Big] d |\nu|({\varphi}).
\end{split}
\end{equation}
For all $g \in {G}_{\zeta}$ and $\tau < ||\mu_0||_{\infty}$, one has the stability estimate 
$$- \sum_{X\in E(g)}  \Phi_{\tau}(X)(\varphi) \leq c \sum_{x\in \zeta}\sum_{y \in \mathbb{Z}^d}  \sum_{1\leq k \leq N}J^k_{xy} \leq c' |\zeta|, \text{ $d|\nu|$-a.e.}
$$ 
Hence, using a tree-graph inequality that can be found, e.g., in \cite{Br86}, Corollary 3.2(a), one obtains the bound  
\begin{equation} \label{eq:cluster_conv_13}
\begin{split}
\sum_{\substack{\zeta \in \Gamma: \\ y \in \zeta, |\zeta|=n}}|z_h^{\tau}(\zeta)| 
&\leq e^{cn} \sum_{\substack{\zeta \in \Gamma: \\ y \in \zeta, |\zeta|=n}}   \sum_{t \in {T}_{\zeta}} \int \Big[ \prod_{X \in E(t)} |\mu_{X}^{\tau}({\varphi})| \Big] \prod_{x\in \zeta} d |\nu|_{1,h}({\varphi}_{x}) \\
&\leq e^{c'n} M_{\infty}^n \sum_{\substack{\zeta \in \Gamma: \\ y \in \zeta, |\zeta|=n}}   \sum_{t \in {T}_{\zeta}} \int \Big[ \prod_{X \in E(t)} |\Phi_{\tau}(X)(\varphi)| \Big] \prod_{x\in \zeta} d \mathbb{P}_{h}({\varphi}_{x}),
\end{split}
\end{equation}
for all $h \in \gamma_{\textnormal{v}}(\tilde{\alpha}, \tilde{u})$, where $T_{\zeta}$ denotes the set of all trees with vertex set $\zeta$ and edges belonging to $\mathcal{B}_{\zeta}$. The second line in \eqref{eq:cluster_conv_13} follows from \eqref{eq:cluster_conv_11} and the elementary inequality $|e^a-1|\leq|a|e^{|a|}$, for $a \in \mathbb{R}$, using that $|E(t)| = n-1$, since $t$ is a tree on $n$ vertices. Note that, because $\Phi_{\tau}$ has finite range (cf. \eqref{eq:cond_J}), any tree $t \in T_{\zeta}$ yielding a non-zero contribution to the right-hand side of \eqref{eq:cluster_conv_13} has the property that $\text{diam}(X) \leq r$, for all $X \in E(t)$. Hence the degree of any vertex in $t$ is bounded by $cr^d<\infty$. It is then easy to see that, given $t \in T_{\zeta}$, one can extract a subset of edges $\hat{E}(t) \subset E(t)$, $|\hat{E}(t)| \geq cn $, with the property that all edges in $\hat{E}(t)$ are ``vertex-disjoint''; (i.e., given any two edges $e\neq e'$ in 
$\hat{E}(t)$, one has that $e\cap e' = \emptyset$). Foregoing factors in \eqref{eq:cluster_conv_13} indexed by $X\in E(t)\setminus \hat{E}(t)$  (at the cost of gaining a factor $e^{cn}$), the fact that all edges in 
$\hat{E}(t)$ are vertex-disjoint implies that the resulting integral factorizes, and one obtains the bound
\begin{equation} \label{eq:cluster_conv_14}
\begin{split}
\sum_{\substack{\zeta \in \Gamma: \\ y \in \zeta, |\zeta|=n}}|z_h^{\tau}(\zeta)| 
&\leq e^{cn} \sum_{\substack{\zeta \in \Gamma: \\ y \in \zeta, |\zeta|=n}}   \Big[ \sum_{t \in \hat{T}_{\zeta}}    \prod_{X \in E(t)} \sum_{1\leq k \leq N} J_X^k \Big] \sup_{\substack{ t\in \hat{T}_{\zeta}, \\ \{x,y\} \in \hat{E}(t)}} \sup_{1\leq k \leq N}  \mathbb{E}_h[ |\varphi_x^k \varphi_y^k - \tau^2 \delta_{k,1}|]^{cn}.
\end{split}
\end{equation}
To estimate the last factor (expectation) on the right side of this expression, one observes that it follows from the definition of $\mathbb{P}_h$ (see \eqref{eq:cluster_conv_11}) by using rotational invariance of $\mu_0$ (cf. \eqref{eq:cond_mu_0}) that $\mathbb{P}_h \stackrel{w}{\rightarrow}\delta_p$, as $\Re \, h \to \infty$, where $\delta_p$ stands for the Dirac mass at \mbox{$p= (||\mu_0||_{\infty}, 0,\dots, 0) \in \mathbb{R}^N$}. In particular, given $\delta > 0$ and setting $\tau(\delta) = ||\mu_0||_{\infty} - \delta$, one can find a large, but finite constant $c_4(\delta)>0$ such that $\mathbb{P}_h(\varphi_0 \in B_{\delta}(p))\geq 1-\delta$, for all $\Re \, h > c_4(\delta)$, (with $B_\delta(p)$ the ball of radius $\delta$ around $p$). Let $\Omega_{x}^{}= \{ \varphi_x \notin B_{\delta}(p)\}$, $x\in \mathbb{Z}^d$. One then finds that
$$
\mathbb{E}_h[ |\varphi_x^k \varphi_y^k - \tau(\delta)^2\delta_{k,1}|] \leq 2 ||\mu_0||_{\infty} \mathbb{P}_h[\Omega_{x} \cup \Omega_{y}]+ \mathbb{E}_h[ |\varphi_x^k \varphi_y^k - \tau(\delta)^2\delta_{k,1}|\cdot1\{ \varphi_x, \varphi_y \in B_{\delta}(p)\}],
$$
for all $\delta>0$ and all $h$ with $\Re \, h > c_4(\delta)$. The first term on the right-hand side is of order $\delta$, by a union bound, and so is the second term, for any $k \in \{1,\dots,N\}$. To see this, one uses that $|\varphi_x^k \varphi_y^k|\leq \delta^2$, for all $k \geq 2$, on the ``event of interest'', and that $|\varphi_x^1 \varphi_y^1 - \tau(\delta)^2|= |(\varphi_x^1-\tau(\delta))(\varphi_y^1 + \tau(\delta))+ \tau(\delta)(\varphi_y^1-\varphi_x^1)|\leq c( |\varphi_x^1-\tau(\delta)| + |\varphi_y^1-\varphi_x^1| ) \leq c \delta$ whenever $\varphi_x, \varphi_y \in B_{\delta}(p)$. Putting things together, one arrives at
\begin{equation} \label{eq:cluster_conv_15}
\sup_{1\leq k \leq N} \mathbb{E}_h[ |\varphi_x^k \varphi_y^k - \tau(\delta)^2\delta_{k,1}|] \leq c \delta, \end{equation}
for all $\delta>0$, $\Re \, h > c_4(\delta)$ and $x,y \in \mathbb{Z}^d$.
In view of \eqref{eq:cluster_conv_14}, it remains to show that the term $ \mathbf{T}:= \sum_{\substack{\zeta \in \Gamma: \\ y \in \zeta, |\zeta|=n}}   \sum_{t \in {T}_{\zeta}}    \prod_{X \in E(t)} J_X $ (with $J_X = \sum_k J_X^k$) grows at most exponentially in $n$. Indeed, since $|{T}_{\zeta}| = n^{n-2}$ when $|\zeta|=n$, one has that 
\begin{equation*}
\mathbf{T}\leq \frac{1}{(n-1)!}\sum_{y_2,\dots y_n} \sum_{t \in {T}_{\{y_1,\dots,y_n\}}}\prod_{X \in E(t)} J_X \leq \frac{n^{n-2}}{(n-1)!} \sup_{t\in T_n} \sum_{y_2,\dots y_n} \prod_{\{ i,j\} \in E(t)} J_{y_iy_j},
\end{equation*}
where $T_{\lbrace y_1,..., y_n \rbrace}$ is the set of trees on $\{ y_1,\dots, y_n\}$, and $y_1=y$. By Stirling's formula, the factor $n^{n-2}/(n-1)!$ grows exponentially in $n$. By induction one shows, with $J = \sum_x J_{0,x}$, that 
$\sum_{y_2,\dots y_n} \prod_{\{ i,j\} \in E(t)} J_{y_iy_j} \leq J^{n-1}$, for all $n \geq 2$. This is obvious when $n=2$, and, in carrying out the induction step, one may assume without loss of generality that $y_n$ is an end point of the tree, i.e., only \textit{one} edge is incident upon it. All in all, this yields the bound $\mathbf{T}\leq e^{cn}$. Finally, substituting this bound and the one in \eqref{eq:cluster_conv_15} into \eqref{eq:cluster_conv_14}, one obtains that 
$$\sum_{\substack{\zeta \in \Gamma: \\ y \in \zeta, |\zeta|=n}}|z_h^{\tau}(\zeta)| \leq e^{cn} \delta^{c'n},$$
 for all $n\geq 2$, $\delta>0$, $y \in \mathbb{Z}^d$ and $h \in \gamma_{\textnormal{v}}(\tilde{\alpha}, \tilde{u})$ satisfying $\Re \, h > c_4(\delta)$. The bound in \eqref{eq:cluster_conv} follows from this estimate and the computation below \eqref{eq:cluster_conv_11} (for $n=1$), upon choosing $\delta= \delta(\varepsilon)$ sufficiently small, for a given $\varepsilon > 0$, and choosing $\eta(\varepsilon)= c_4(\delta(\varepsilon))$.\\
This completes the proof of the first part, $i)$, of Lemma \ref{L:eq:clust_conv}.\\
To prove part $ii)$, one notices that, by virtue of \eqref{eq:cond_Laplace2}, a uniform upper bound on the quantity $M_h$ appearing in \eqref{eq:cluster_conv_11} is obtained, for all $h \in \mathbb{H}_+$ with $\Re \, h > \tilde{u}(u_0=1)\equiv c_5$. 
Thus, adapting the constant $\eta(\varepsilon)= c_4(\delta(\varepsilon)) \vee c_5$, this completes the proof of Lemma \ref{L:eq:clust_conv}.
\hfill $\square$

\medskip
On the basis of the bounds on polymer activities proven in Lemma \ref{L:eq:clust_conv} and using that $\Phi$ has finite range (cf. \eqref{eq:cond_J}, a condition that can actually be relaxed), the proof of Lemma \ref{L:gamma_c_bound} is by now standard; (one sets $c_1 = \eta(\varepsilon = \frac16)$, with $\eta(\cdot)$ as in Lemma \ref{L:eq:clust_conv}; see for instance \cite{Si93}, Theorem V.7.12, and \cite{F90?}, Chapter 2.) \hfill $\square$

\section{Epilogue} \label{S:Epilogue}

We conclude this paper by discussing some extensions of our results. The first one pertains to Theorem \ref{T:main}. Assume that \eqref{eq:cond_mu_0}, \eqref{eq:cond_Laplace} and \eqref{eq:cond_J} hold. Then, as already briefly mentioned, one can prove the following generalization of Theorem \ref{T:main}: Given $n$ points $x_1, \dots, x_n \in \mathbb{Z}^d$, $n \geq 2$, let 
$$
\ell (x_1,\dots, x_n)= \min_{T \in T_n}\sum_{\{ i,j \} \in E(T)}|x_i-x_j|,
$$
denote the length of the ``shortest'' tree connecting these $n$ points, ($|\cdot|$ denoting, e.g., the Manhattan distance on the lattice $\mathbb{Z}^{d}$), where $T_n$ is the set of trees on $\{ 1,\dots, n\}$ and $E(T)$ denotes the set of edges of $T \in T_n$. The (connected) $n$-point function is defined as
\begin{equation*}
\langle \varphi_{x_1}^{i_1}  ;  \, ... \, ;  \varphi_{x_n}^{i_n} \rangle^{\Phi}_{ \beta, h} = \lim_{\Lambda \nearrow \mathbb{Z}^{d}} \Big( \frac{\partial^n}{\partial \varepsilon_1 \cdots \partial \varepsilon_n} \log  \Big\langle  \displaystyle \prod_{1\leqslant k\leqslant n} (1 + \varepsilon_k \varphi_{x_k}^{i_k}) \Big\rangle_{\Lambda, \beta, h}^{\Phi}  \Big) \Big\vert_{\substack{\varepsilon_i =0, \text{ all }i}},
\end{equation*}
for given components $i_k\in \{1,...,N \}$, $1\leq k \leq n$. It is shown in \cite{FR12} that the limit exists, regardless of boundary conditions, so long as they do not invalidate the Lee-Yang theorem, and that,  for all 
$\beta>0$, the function $\langle \varphi_{x_1}  ;  \, ... \, ;  \varphi_{x_n} \rangle^{\Phi}_{ \beta, h}$ is analytic in $h$, for $\mathrm{Re} \ \! h \neq 0$. Setting $\beta=1$ and omitting it from our notation, as above, one has the following generalization of Theorem \ref{T:main}.

\begin{thmbis}\label{T:main_bis}

\medskip
\noindent If $(\mu_0, \Phi)$ satisfy conditions \eqref{eq:cond_mu_0}, \eqref{eq:cond_Laplace} and \eqref{eq:cond_J} then there exists a function $\widetilde{m}(h)>0$ of $h$ such that, for all $n \geq 2$, $x_k \in \mathbb{Z}^d$ and $i_k\in \{1,...,N \}$, $1\leq k \leq n$,
\begin{equation} \label{eq:main_bis}
\langle \varphi_{x_1}^{i_1}  ;  \, ... \, ;  \varphi_{x_n}^{i_n} \rangle^{\Phi}_{ h} \leq c(h) e^{-\tilde{m}(h)\cdot \ell (x_1,\dots, x_n)},
\end{equation} 
for all real-valued $h \neq 0$.
If, in addition, \eqref{eq:cond_Laplace2} is satisfied, \eqref{eq:main_bis} holds for all values of $h \in \mathbb{C}$ with $\Re \, h \neq 0$, and \eqref{eq:main2} holds with $\widetilde{m}(\cdot)$ in place of $m(\cdot)$. 
\end{thmbis}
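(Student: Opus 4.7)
The plan is to adapt the scheme of the proof of Theorem \ref{T:main}, replacing $|x|$ by $\ell(x_1,\dots,x_n)$ throughout. For fixed marked points $(x_1,\dots,x_n)$ and components $(i_1,\dots,i_n)$, I would define, for $\alpha \in (0,\tfrac{\pi}{2})$ and $\varepsilon>0$, the function
\begin{equation*}
\mathscr{F}^{\varepsilon,\alpha}_{x_1,\dots,x_n}(z) = e^{\varepsilon z^{\pi/2\alpha} \, \ell(x_1,\dots,x_n)} \cdot \langle \varphi_{x_1}^{i_1}\,;\,\dots\,;\,\varphi_{x_n}^{i_n}\rangle^{\Phi}_{z}, \qquad z\in\mathbb{H}_+,
\end{equation*}
which is analytic on $\mathbb{H}_+$ by the $n$-point analogue of Theorem 7 in \cite{FR12} already quoted before the statement. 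The strategy is then to establish uniform bounds for the family $\{\mathscr{F}^{\varepsilon,\alpha}_{x_1,\dots,x_n}\}$ on $\partial\Sigma_\alpha=\gamma_{\textrm{c}}\cup\gamma_{\textrm{r}}\cup\gamma_{\textrm{v}}$ and pull these into the interior by the maximum principle exactly as in the proof of Theorem~\ref{T:main}; the final optimization of $\delta,\eta,\alpha,\varepsilon$ in terms of $h$ is unchanged.

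For the bound along $\gamma_{\textrm{c}}\cup\gamma_{\textrm{r}}$, I would repeat the argument of Lemma \ref{L:gamma_r_bound} verbatim: applying Cauchy's integral formula in the polydisc to the analytic free-energy $f_\infty(\xi,k)$ from \cite{FR12}, but now taking $n$ derivatives in the source parameters $\varepsilon_1,\dots,\varepsilon_n$ instead of two, and exploiting the same uniform-on-compact-subsets analyticity, yields a uniform bound $\sup_{x_1,\dots,x_n}\sup_{z\in\gamma_{\textrm{c}}\cup\gamma_{\textrm{r}}}|\langle \varphi_{x_1}^{i_1};\dots;\varphi_{x_n}^{i_n}\rangle^{\Phi}_{z}|<\infty$. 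Combined with $|e^{\varepsilon z^{\pi/2\alpha}\ell}|\leq e^{\varepsilon \delta^{\pi/2\alpha}\ell}$ on that portion of the boundary, this is the desired estimate.

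The heart of the matter, and the main obstacle, is the $\gamma_{\textrm{v}}$ estimate: a large-field bound
\begin{equation*}
\big|\langle \varphi_{x_1}^{i_1}\,;\,\dots\,;\,\varphi_{x_n}^{i_n}\rangle^{\Phi}_{z}\big| \leq c(n)\, e^{-m_0\,\ell(x_1,\dots,x_n)}, \qquad z\in\gamma_{\textrm{v}},
\end{equation*}
derived from the cluster expansion of Section \ref{S:cluster}. The key point is to generalize \eqref{eq:cluster1}: one writes $\langle\varphi_{x_1}^{i_1};\dots;\varphi_{x_n}^{i_n}\rangle^{\Phi}_{\Lambda,h}$ as $\partial^n_{s_1\cdots s_n}\log Z^\tau_{\Lambda,h}(\prod_k(1+s_k\varphi_{x_k}^{i_k}))|_{s=0}$, enlarges the polymer family to $\Gamma=\{\zeta\subset\mathbb{Z}^d:\,2\leq|\zeta|<\infty\}\cup\{\{x_1\},\dots,\{x_n\}\}$ with loop edges at the $x_k$, and applies the Mayer--Kotecký--Preiss expansion of $\log Z^\tau_{\Lambda,h}$. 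The $n$-fold derivative then selects connected clusters touching every marked point $x_1,\dots,x_n$, and one gets the representation
\begin{equation*}
\big|\langle \varphi_{x_1}^{i_1}\,;\,\dots\,;\,\varphi_{x_n}^{i_n}\rangle^{\Phi}_{\Lambda,h}\big| \leq \sum_{\substack{(\zeta_1,\dots,\zeta_m)\in\Gamma^{\,m} \\ \text{connected cluster covering }\{x_1,\dots,x_n\}}} |c(\zeta_1,\dots,\zeta_m)| \prod_k |z_h^{\tau}(\zeta_k)|.
\end{equation*}
Assuming the smallness of activities from Lemma \ref{L:eq:clust_conv} (at $\varepsilon=\tfrac{1}{6}$, say), one concludes by a standard tree-graph/Kotecký--Preiss argument: any cluster appearing in this sum has a connected support of cardinality $\geq c\,\ell(x_1,\dots,x_n)/r$, since $\Phi$ has range $r$ and the support must span $x_1,\dots,x_n$. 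Summing geometrically in cluster size gives a bound of the form $c^n\,\varepsilon^{\,c'\ell(x_1,\dots,x_n)}$, which is precisely the required exponential decay with some mass $m_0>0$ independent of the $x_k$ (the $n$-dependence is harmless as it is absorbed in the prefactor $c(h)$).

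With these two boundary bounds in hand, choosing $\varepsilon$ as in \eqref{eq:def_epsilon} so that $|e^{\varepsilon z^{\pi/2\alpha}\ell}|\leq e^{m_0\ell}$ on $\gamma_{\textrm{v}}$ makes $|\mathscr{F}^{\varepsilon,\alpha}_{x_1,\dots,x_n}|$ uniformly bounded along $\gamma_{\textrm{v}}$; along $\gamma_{\textrm{c}}\cup\gamma_{\textrm{r}}$ one has a uniform $c(h)e^{\varepsilon\delta^{\pi/2\alpha}\ell}$ bound. The maximum principle applied to $\mathscr{F}^{\varepsilon,\alpha}_{x_1,\dots,x_n}$ on $\Sigma_\alpha$ and evaluated at $z=h$ then yields \eqref{eq:main_bis} with $\widetilde{m}(h)=c\,\varepsilon\,(h^{\pi/2\alpha}-\delta^{\pi/2\alpha})$, which is strictly positive by \eqref{eq:otherparams}; taking the supremum over the finitely many choices of $(i_1,\dots,i_n)$ for fixed $n$ is harmless. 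The extension to complex $h$ with $\Re h\neq 0$ under \eqref{eq:cond_Laplace2}, and the corresponding analogue of \eqref{eq:main2}, follows by repeating the argument of Corollary~\ref{C:main2} with $\ell(x_1,\dots,x_n)$ in place of $|x|$.
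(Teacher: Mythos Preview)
Your proposal is correct and follows essentially the same route as the paper's own (sketched) argument: replace the two-point function by the connected $n$-point function in the definition of $\mathscr{F}$, rerun Lemma~\ref{L:gamma_r_bound} with $n$ source derivatives (yielding an $n$-dependent constant $c_0$), extend the cluster expansion by attaching loop edges at all marked points so that contributing clusters must span $\{x_1,\dots,x_n\}$ and hence have size $\gtrsim \ell(x_1,\dots,x_n)/r$, and then apply the maximum principle as before. The paper makes the same two observations you do, namely that the $n$-dependence enters only through the prefactors $c_0,c_2$ (hence $c(h)$) while $m_0$, and therefore $\widetilde{m}(h)$, is uniform in $n$.
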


\bigskip
In essence, the proof of Theorem 1bis is very similar to that of Theorem \ref{T:main} and Corollary \nolinebreak\ref{C:main2}. Only the uniformity of $\widetilde{m}(\cdot)$ in $n$ asserted in Theorem 1bis requires some explanation: One first defines a function similar to the one in \eqref{eq:F}, but with $\langle \varphi_{x_1}^{i_1}  ;  \, ... \, ;  \varphi_{x_n}^{i_n} \rangle^{\Phi}_{ h}$ replacing the two-point function. The results in \cite{FR12}  guarantee that this function is analytic in $h$ on $\mathbb{H}_+$. Using a slight generalization of Lemma \ref{L:eq:clust_conv} and a standard cluster expansion, one shows that Lemma \ref{L:gamma_c_bound} continues to hold for all $n$-point functions, with a constant $c_2$ that might depend on $n$ (to which $\tilde{m}(\cdot)$ is not sensitive), but with a constant $m_0$ that is uniform in $n$; see, e.g., \cite{Si93}, Theorem V.7.13. With regard to the radial estimate of Lemma \ref{L:gamma_r_bound}, a dependence of our bounds in \eqref{eq:gamma_r_bound} on $n$ may only occur through the constant $c_0$, which is, however, inconsequential; cf. \eqref{estimate_final1} and \eqref{eq:posit_explicit}. Theorem 1bis then follows by arguments very similar to those in Sect. 2.

In order to keep the exposition clear, we have considered the specific (class of) interactions \eqref{eq:cond_J}, but our methods are in fact quite robust, and apply to all models $(\mu_0, \Phi)$ discussed in \cite{FR12}. For instance, in the scalar case $N=1$ and for the spin-$\frac12$ reference measure, a Lee-Yang theorem is known to be true at sufficiently small temperatures if one adds a suitably small quartic interaction, see \cite{LR11}, \cite{Ru10}. Theorem \ref{T:main}, Corollary \ref{C:main2} and Theorem \nolinebreak 1bis still hold in this case, and only require adapting the arguments of Lemma \ref{L:eq:clust_conv} in the expansion at large magnetic fields.

Moreover, our results can be extended to certain quantum spin systems satisfying a Lee-Yang theorem, in particular to the quantum XY- and Heisenberg models. For these models, a large-field cluster expansion can be set up and shown to converge. Finally, our ideas could be applied to the analysis of N-component Euclidean $\lambda|\mathbf{\phi}|^4_d$ -field theories for $N=1,2,3$, with $\mathbf{\phi} = (\phi^1,\dots, \phi^N)$, in $d = 2, 3$ space-time dimensions and in the presence of a ``magnetic field'' term. For this purpose, one has to combine results sketched in \cite{FR12} with an expansion akin to the one in \cite{Sp74}, due to T. Spencer. A more detailed discussion of these matters goes beyond the scope of the present note.

\bibliography{rodriguez}
\bibliographystyle{plain}

\end{document}